\newtheorem{Thm}{Theorem}[section]
\newtheorem{theorem}[Thm]{Theorem}
\newtheorem{proposition}[Thm]{Proposition}
\newtheorem{corollary}[Thm]{Corollary}
\newtheorem{lemma}[Thm]{Lemma}
\newtheorem{remark}{Remark}[section]
\def\undertilde#1{\mathord{\vtop{\ialign{##\crcr
$\hfil\displaystyle{#1}\hfil$\crcr\noalign{\kern1.5pt\nointerlineskip}
$\hfil\tilde{}\hfil$\crcr\noalign{\kern1.5pt}}}}}
\title{Stability of Relativistic Quantum Electrodynamics in the Coulomb Gauge}
\author{Christian D.\ J\"akel\footnote{jaekel@ime.usp.br, Dept.~de Matem\'atica Aplicada, 
Univ. de S\~ao Paulo (USP), Brasil} \ 
and Walter F. Wreszinski\footnote{wreszins@gmail.com, 
Instituto de Fisica, Universidade de S\~ao Paulo (USP), Brazil}}        
\begin{document}
\maketitle
\begin{abstract}
We show that relativistic quantum electro\-dynamics in the Coulomb gauge satisfies the following bound, 
which establishes stability: let $H(\Lambda,V)$ denote the Hamiltonian of 
$QED_{1+3}$ on the three-dimensional torus of volume $V$ and with ultraviolet cutoff $\Lambda$.
Then there exists a constant $0<\mu(\Lambda,V)<\infty$
(the vacuum energy renormalization) such that the renormalized Hamiltonian is positive: 
	\[
		H_{ren}(\Lambda,V) \equiv H_{\Lambda,V}+\mu_{\Lambda, V}\cdot \mathbb{1} \ge 0 \; . 
	\]
\end{abstract}

\section{Introduction}

The proof of stability of non-relativistic matter interacting with a classical 
electromagnetic field is one of the 
crown jewels of mathematical physics, both from the point of view of physics and 
mathematics (see \cite[Chapters~1-7 \& 9]{LiS} for a comprehensive exposition and references). 
It accounts for a wide, enormously rich class of phenomena in quantum mechanics, which are 
of crucial importance to the macroscopic world and even to everyday life.

There are, however, various phenomena which 
require the quantization of the electromagnetic field, 
such as spontaneous emission \cite{JJS} and the black-body radiation, with its astoundingly perfect fit to the 
spectrum of the cosmic radiation background \cite{WilPen}. In addition, there is the well-known conceptual 
necessity to quantize the electromagnetic field \cite{BohrR}, an argument which does not extend to the 
gravitational field (as Dyson recently remarked~\cite{Dyson}).

Quantization of the electromagnetic field is, however, a well-known source of trouble. 
Its coupling to non-relativistic 
matter  has been studied extensively, see \cite[Chapter 11, and references given there]{LiS} and 
Spohn's treatise \cite{Spohn}, Part II, Chapters 19 and 20, and also the references he cites. 
One particular important 
step was taken by Lieb and Loss \cite{LL1} (see also \cite[pp.~314--315]{Spohn}), who established  
an upper bound $O(\Lambda^{12/7})$ to the ground state energy. 
As the latter disagrees 
with the result $O(\Lambda^{2})$ suggested by perturbation theory, the bound by Lieb and Loss
implies that perturbation theory can not converge: it simply would yield a 
wrong picture of the electron cloud (see also \cite{Li2} for an 
illuminating discussion).  

For non-relativistic matter interacting with the radiation field, 
the term $\vec{p} \cdot \vec{A}_{\Lambda}$ in the Hamiltonian, where $\vec{p}$ stands for 
the momentum operator of a particle, and $\vec{A}_{\Lambda}$ for the quantized electromagnetic vector 
potential field with ultraviolet cutoff $\Lambda$, \emph{seems to indicate} a lower bound to the Hamiltonian only of 
type $-c \Lambda$, with $c > 0$ proportional to the number of static nuclei. Fr\"{o}hlich \cite{Frohlich} has 
remarked that, while such a bound proves stability of matter if an ultraviolet cutoff is imposed on the theory,
the linear dependence on $\Lambda$ is disastrous, physically speaking. He raised the question whether such 
a catastrophe does indeed prevail, relating it to Landau's conjectures (the so-called Landau pole, see \cite{Landau}).
He also remarked, at the time (and in this respect the situation
has not changed since) that \emph{it was not known} (provided a mass renormalization $M_{\Lambda}$ and a chemical potential 
renormalization $\mu_{\Lambda}$ are chosen appropriately) whether a lower
bound on $H_{\Lambda}(M_{\Lambda},\mu_{\Lambda})$ can be found which is \emph{uniform} in $\Lambda$. 

Similar problems are expected
in the case of relativistic quantum electrodynamics in the Coulomb gauge, where the term $\vec{j} \cdot \vec{A}_{\Lambda}$
plays a role similar to the above mentioned term $\vec{p} \cdot \vec{A}_{\Lambda}$, where now $\vec{j}$ denotes
a (regularized) electron-positron current, but where, in addition, a charge renormalization is expected from perturbation
theory.
In this paper we show that for the Hamiltonian of relativistic quantum electrodynamics in the Coulomb gauge 
a suitable vacuum energy renormalization  can be found such that the renormalized Hamiltonian (with no mass and
no charge renormalization) is positive. In doing so, we provide a more complete picture of the electron cloud.
It affects both the effective interaction between the electrons, 
and their kinetic energy, yielding a picture of ``dressed'' electrons and positrons. 

\section{QED$_{1+3}$ on the Three-Torus}
\label{sec:2}

We will study quantum electrodynamics in the Coulomb gauge, with a  
cut-off Hamiltonian  
	\begin{equation}
		H(\Lambda,V) \doteq \underbrace{H^\circ_{ferm.}(V) + H^\circ_{bos.} (V)}_{ =: H_\circ (V)} 
		+ H_{int}(V,\Lambda) + {:}H_{Coul.}(V,\Lambda){:} 
		\label{(1)}
	\end{equation}
acting on a Hilbert space $\mathcal{H} \doteq \mathcal{H}_{ferm.} \otimes \mathcal{H}_{bos.}$  
that is the tensor 
product of an antisymmetric Fock space $ \mathcal{H}_{ferm.}$ and 
a symmetric Fock space $\mathcal{H}_{bos.} $. 
The one-particle space for both Fock spaces is $\ell^{2}(\Gamma_\kappa)$, with 
	\[
		\Gamma_{V} \doteq  \left\{ \tfrac{2 \pi \nu}{V^{1/3}}  \mid \nu \in \mathbb{Z}^{3} \right\} 
		\quad 
		\text{and}
		\quad 
		\Gamma_\kappa \doteq \Gamma_{V} \cap \Lambda \; . 
	\]
Here $\Lambda$ is a \emph{finite} set, symmetric 
under inversion about each coordinate plane containing the origin, and therefore invariant
under inversion through 
the origin. The number of sites in $\Lambda$ will be denoted by $| \Lambda | $.  
Since $| \Lambda | $ is finite,  there are no antisymmetric $n-$particle functions for $n> |\Lambda|$. 
Hence, Pauli's principle ensures that $ \mathcal{H}_{ferm.}$ is finite-dimensional; 
however, this argument does not hold for bosons, and $\mathcal{H}_{bos.}$ is, in fact,  
infinite-dimensional. 
 
The first two operators on the r.h.s.~in \eqref{(1)} denote the free 
massiv fermion Hamiltonian 
	\begin{equation}
		H^\circ_{ferm.}(V) = \sum_{p \in \Gamma_\kappa} \sum_{\ell\in \{1,2\}} 
		\underbrace{ \sqrt{p^{2} + m^{2}} }_{=: \omega (p)} \; 
		\Bigl( b_\ell^{*}(p) b_\ell(p) + d_\ell^{*}(p) d_\ell (p) \Bigr)
		\otimes \mathbb{1}
		\; , 
		\label{(4.1)}
	\end{equation}
and the free massless boson Hamiltonian
	\begin{equation}
		H^\circ_{bos.} (V) = \mathbb{1} \otimes \sum_{k \in \Gamma_\kappa} 
		\sum_{\jmath \in \{1,2\}} |k| \, a_\jmath^* (k) a_\jmath (k) \; , 
		\label{(4.2)}
	\end{equation}
respectively. The constant $m>0$ appearing in \eqref{(4.1)} is the electron mass. The photon (resp.~electron and 
positron) annihilation and creation operators $a_\jmath (k),a_\jmath^* (k)$ (resp.~$b_\ell (p)$, $b_\ell^{*}(p)$, 
$d_\ell(p)$, $d_\ell^{*}(p)$)  are normalized so that
	\[
		\bigl[ a_\jmath (k),a_{\jmath'}^* (k') \bigr] = \delta_{k,k^{'}} \delta_{ \jmath,\jmath' } \; , 
	\]
and
	\[		
		\bigl\{ b_\ell (p) , b_{\ell'} (p') \bigr\} = \delta_{p,p^{'}} \delta_{\ell ,\ell'} \; , 
		\qquad
		\bigl\{ d_\ell(p) , d_{\ell'}^*(p') \bigr\} = \delta_{p,p^{'}} \delta_{\ell ,\ell'} \; , 
	\]
where $\delta$ is the Kronecker delta. All other commutators (or anti-commutators) are zero. 
The  operators $H^\circ_{ferm.}(V)$ and $H^\circ_{bos.}(V)$ depend on $V$ (but not on~$\Lambda$).
In the sequel, however, we frequently deal with operators that depend on both~$V$  and $\Lambda$, and 
so it will be convenient to set $\kappa = (V, \Lambda)$.

\goodbreak
The \emph{photon-fermion interaction}, with periodic boundary conditions, is	
	\begin{equation}
		H_{int}(\kappa) = - \int_{V} {\rm d}^3 x \,  J_{\kappa} (x)
		 \cdot e A_\kappa(x)\; . 
		\label{(3)}
	\end{equation}
The dot $ \cdot $ denotes the scalar product of  three-vectors in Euclidean space. The 
\emph{electric current} $J_\kappa (x) = \bigl(J^{(1)}_\kappa (x), J^{(2)}_\kappa (x), J^{(3)}_\kappa (x) \bigr)$ 
appearing in \eqref{(3)} is a three-vector-valued operator on ${\cal H}_{ferm}$, whose components can be 
expressed, using the Pauli matrices,  in terms of the electron-positron 
fields  $\Psi_\kappa(x)$:   
	\begin{equation}
	\label{(3a)}
		J^{(i)}_\kappa (x) \doteq  \langle \Psi_\kappa^{\dag}(x) ,  \alpha_i  
		\Psi_\kappa(x) \rangle \; , 
		\qquad \text{where $\alpha_i = \begin{pmatrix} 0 & 
		\sigma_i  \\
			\sigma_i  & 0 \end{pmatrix}  $} 
	\end{equation}
is a $4\times 4$-matrix for each $i=1, 2,3$. 
The symbol $\langle \, . \,  ,  \, . \, \rangle $ denotes the 
 scalar product in $\mathbb{C}^{4}$. The star on an operator denotes adjoint and the dagger above in the definition of the
current does not change the four vectors, in conformance with the above reference to the scalar product in $\mathbb{C}^{4}$, see below.
 The \emph{electron-positron field} itself is given by
	\begin{align}
		\Psi_\kappa(x) &\doteq \frac{1}{V^{1/2}} \sum_{p \in \Gamma_\kappa}
		\sum_{\ell \in \{1, 2\} }  
		\Bigl( b_{\ell} (p) \, u_{\ell} (p) \, {\rm e}^{i p \cdot x} + d_{\ell}^* (p) \, v_{\ell}(p) \, {\rm e}^{-i p \cdot x} \Bigr) \; , 
	\label{(11)}
	\end{align}
        \begin{align}
                \Psi_\kappa^{\dag}(x) &\doteq \frac{1}{V^{1/2}} \sum_{p \in \Gamma_\kappa}
		\sum_{\ell \in \{1, 2\} }  
		\Bigl( b_{\ell}^* (p) \, u_{\ell} (p) \, {\rm e}^{-i p \cdot x} + d_{\ell} (p) \, v_{\ell}(p) \, {\rm e}^{i p \cdot x} \Bigr) \; , 
        \end{align} 
with four-vectors
	\begin{align*}
		& u_{1} (p) = \tfrac{1}{\sqrt{2 \omega (p)  
									(\omega (p)+m)}} \left( \begin{smallmatrix}
									1 \\
									0 \\
									p_{3}  \\
									p_{1}+ip_{2}
									\end{smallmatrix} \right) , 
	\quad
		& u_{2} (p) = 
									\tfrac{1}{\sqrt{2 \omega (p)  
									(\omega (p)+m)}}
									\left( \begin{smallmatrix}
										0 \\
										1 \\
									p_{1}-ip_{2}  \\
									-p_{3}  
							 		\end{smallmatrix} \right) , 
	\\
		& v_{1} (p) = \tfrac{1}{\sqrt{2  \omega (p)  
									(\omega (p)+m)}}
									\left( \begin{smallmatrix}
									-p_{1}+ip_{2} \\
									p_{3} \\
									0 \\
									1 
							 		\end{smallmatrix} \right) , 
	\quad    
		& v_{2} (p) = \tfrac{1}{\sqrt{2 \omega (p) 
									(\omega (p)+m)}}
									\left( \begin{smallmatrix}
									p_{3}  \\
									p_{1}+ip_{2} \\
									1 \\
									0 
									\end{smallmatrix} \right) . 
	\end{align*}  
The Fermi field $\Psi_\kappa(x)$ is a \emph{bounded} operator for each $x$, 
but this statement does not hold for   
the vector potential $A_\kappa (x)$ appearing in \eqref{(3)}. The latter is given by
	\begin{align}
		A_\kappa (x) & \doteq \frac{1}{V^{1/2}} \sum_{k \in \Gamma_\kappa \setminus \{ 0 \}}
		\sum_{\jmath  \in \{1,2\}} \frac{\left( a_{\jmath} (k) {\rm e}^{i k \cdot x} + a_{\jmath}^* (k) {\rm e}^{i k \cdot x} \right)}
		{\sqrt{2|k|}}  \; 
		\epsilon_{\jmath} (k)  \; . 
	\label{(13)}
	\end{align}
The 
\emph{polarisation vectors} can be chosen (see \cite{LL2}) such that for $k_i \ge 0$, $i = 1,2,3$, 
	\begin{align}
		\epsilon_{1}(k) & \doteq \frac{1}{\sqrt{k_{1}^{2}+k_{2}^{2}}} \begin{pmatrix} k_{2} \\  -k_{1} \\ 0 \end{pmatrix} , 
	\label{(14.3)}
	\qquad \epsilon_{2}(k)  \doteq \frac{k}{|k|} \times \epsilon_{1}(k) 	\; , 
	\end{align}
and $\epsilon_{1}(-k) = \epsilon_{2}(k)$, as well as $\epsilon_{2}(-k) = \epsilon_{1}(k)$. The question of the choice of
polarization vectors in qed is a nontrivial matter, see also \cite{LL4}, and the forthcoming (38). 
Note that the three vectors $\bigl(\epsilon_{1}(k), \epsilon_{2}(k), \frac{k}{|k|}\bigr)$ form an oriented orthonormal 
basis in $\mathbb{R}^3$. Hence, by \eqref{(13)}, 
	\[
		\nabla \cdot A_\kappa(x) = 0 \; , 
	\]
which characterises \emph{Coulomb gauge}. 
The final term in \eqref{(1)} represents the \emph{Coulomb interaction} 
	\begin{equation}
		{:}H_{Coul.}(\kappa){:}  
		\doteq e^{2}  \int_{V \times V} {\rm d}^3x  {\rm d}^3y \,   {\cal V}_\kappa(x-y) 
		\, {:} \Psi_\kappa^{*}(x) \Psi_\kappa(x)  
		\Psi_\kappa^{*}(y) \Psi_\kappa(y){:}  
		\otimes \mathbb{1} \; . 
		\label{(18)}
	\end{equation}
We note that, with the above notation, $H_{Coul.}(\kappa)$ denotes the instantaneous Coulomb interaction \emph{without} normal
ordering, \emph{i.e.}, without the Wick dots. Here ${\cal V}_\kappa (x)$ denotes a regularized Coulomb potential on the torus:
	\begin{equation}
		{\cal V}_\kappa(x) = \frac{1}{V} \sum_{k \in \Gamma_\kappa \setminus \{ 0 \}} 
 		\frac{{\rm e}^{i k \cdot x} }{|k|^{2}} \; , \qquad 
		\kappa = (\Lambda, V) \; .  
	\label{(19)}
	\end{equation}
In the limit $\lim_{V \to \infty} \lim_{\Lambda \to \infty}  {\cal V}_{(\Lambda, V)}$ 
tends to the Coulomb potential in the distributional sense.
Note that due to the exclusion of $k=0$ in \eqref{(13)}
${:} \, H_{int}(\kappa)  \, {:} = H_{int}(\kappa)$.

\goodbreak
It was proven in \cite{JLW}
that there exists a dense set of vectors $\mathcal{D}$ in $\mathcal{H}$ 
on which $H(\Lambda,V)$ is essentially self-adjoint. The closure of  
$H(\Lambda,V)$ has a purely discrete spectrum with finite multiplicity, it is bounded from below, 
and the eigenfunctions of $H(\Lambda,V)$ lie in $\mathcal{D}$. 

\section{A finite-dimensional Grassmann algebra}

In the sequel, $c_\ell^* (p) = b_\ell^* (p), d_\ell^* (p)$ will denote either an electron or a positron creation operator.  
To every vector $\Psi \in \mathcal{H}_{frem.}$, written in the form
	\begin{equation}
		\Psi = \sum_{n=0}^{\infty} \; \; \sum_{p_i \in \Gamma_\kappa}
		\sum_{\ell_i \in  \{ 1, 2\}} \; K^{(n)}_{ \ell_1, \ldots , \ell_n} (p_{1}, \ldots , p_{n}) \, 
		c_{\ell_1}^{*}(p_{1})  \cdots  c_{\ell_n}^{*}(p_{n})  \Omega_\circ \; , 
	\label{(22)}
	\end{equation}
we will now associate an element
	\[
		\mathbb{\Psi} (\mathbb{c}^*) = \sum_{n=0}^{\infty} \frac{1}{n!^{1/2}} \; \; \sum_{p_i \in \Gamma_\kappa}
		\sum_{\ell_i \in  \{ 1, 2\}} \; K^{(n)}_{ \ell_1, \ldots , \ell_n} (p_{1}, \ldots , p_{n}) \,  
		\mathbb{c}_{\ell_1}^*(p_1) \cdots \mathbb{c}_{\ell_n}^*(p_n)
	\]
of the Grassmann algebra ${\cal G}_{n}$ generated by anti-commuting 
symbols $\mathbb{c}_\ell (p)$ and $\mathbb{c}_{\ell'}^* (p')$:
	\begin{equation}
		\bigl\{ \mathbb{c}_\ell (p),\mathbb{c}_{\ell'}(p') \bigr\} 
		= \bigl\{ \mathbb{c}_{\ell}(p) ,\mathbb{c}^{*}_{\ell'}(p') \bigr\} 
		= \bigl\{ \mathbb{c}^*_{\ell}(p) , \mathbb{c}^*_{\ell'}(p') \bigr\} = 0 \; , 
	\label{(21)}
	\end{equation}
with $p, p' \in \Gamma_\kappa$ and $\ell, \ell' \in \{1,2\}$. The algebra ${\cal G}_{n}$ 
is of dimension $n = 2^{8 | \Lambda |}$; see \cite[p.~49]{Berezin}. The set of~$\mathbb{\Psi}$'s corresponding to 
the state vectors $\Psi \in \mathcal{H}_{frem.}$ is denoted by ${\cal L}$. 

\goodbreak
To further simplify the notation, we enumerate the momenta in $p_i \in \Gamma_\kappa$, 
and the set 
	\[
		\mathbb{c}_{2i+\ell -1}  \equiv \mathbb{c}_{\ell} (p_i) \; , 
		\qquad
		i= 1, 2, \ldots,   2^{8 | \Lambda |-1}  \; ,  \quad \ell \in \{1, 2 \} \; . 
	\]	
Next, one introduces symbols ${\rm d} \mathbb{c}_{1}, \ldots, {\rm d}\mathbb{c}_{n}$, subject to the commutation relations
	\begin{equation}
		\{ {\rm d}\mathbb{c}_{i},{\rm d}\mathbb{c}_{k} \}
		= \{\mathbb{c}_{k},{\rm d}\mathbb{c}_{i} \}
		= \{ {\rm d}\mathbb{c}_{i}^{*},{\rm d}\mathbb{c}_{k}^{*}\}
		= \{\mathbb{c}_{k}^{*},{\rm d}\mathbb{c}_{i}^{*}\} = 0 \, ,  
		\; \; 
		 i,k = 1,  \ldots,   2^{8 | \Lambda |} \; , 
	\label{(25.1)}
	\end{equation}
and defines single integrals
	\begin{equation}
		\int {\rm d}\mathbb{c}_{i} = 0 \; ,  \quad
		\int {\rm d}\mathbb{c}_{i}^{*} = 0 \; ,  \quad \int {\rm d}\mathbb{c}_{i} \; \mathbb{c}_{i}  = 1 
		\quad \text{and} \quad
		\int  {\rm d}\mathbb{c}_{i}^{*} \; \mathbb{c}_{i}^{*} = 1 \; . 
		\label{(25.2)}
	\end{equation}
Multiple integrals are understood as iterated integrals. Thus, \eqref{(25.1)} and \eqref{(25.2)} define the 
integral 
	\[
		\int {\rm d}\mathbb{c}_{n} \cdots {\rm d}\mathbb{c}_{1}\; f(\mathbb{c}) 
	\]
for all monomials and one can than extend the integrals to 
arbitrary elements by linearity. 

The Grassmann algebra  
${\cal G}_{n}$ has an involution \cite[p.~66]{Berezin} and, corresponding 
to the inner product $\langle \Psi_{1}, \Psi_{2} \rangle$ in $\mathcal{H}_{ferm.}$, there is an associated 
inner product in ${\cal L}$ 
	\begin{align}
		\langle \mathbb{\Psi}_{1}, \mathbb{\Psi}_{2} \rangle & 
		= \int \prod {\rm d}\mathbb{c} {\rm d}\mathbb{c}^{*}\, {\rm e}^{-\mathbb{c} \mathbb{c}^{*}}  \; 
		\mathbb{\Psi}_{1} \mathbb{\Psi}_{2} \; , 
		\label{(26)}
	\end{align}
where ${\rm e}^{-\mathbb{c} \mathbb{c}^{*}} \doteq \prod_{i=1}^{n} {\rm e}^{-\mathbb{c}_{i}\mathbb{c}_{i}^{*}}$
and $\prod {\rm d}\mathbb{c}{\rm d}\mathbb{c}^{*} \doteq \prod_{i=1}^{n} {\rm d}\mathbb{c}_{i} {\rm d} \mathbb{c}_{i}^{*} $. 

\begin{lemma}
The identity \eqref{(26)} converts ${\cal L}$ into a Hilbert space, which serves as a realization 
of the fermionic Fock space $\mathcal{H}_{ferm.}$ (\cite[Theorem 3.1, p.~83]{Berezin}). 
\end{lemma}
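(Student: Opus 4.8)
The plan is to verify directly that the sesquilinear form \eqref{(26)} reproduces, on the image $\mathcal{L}$ of $\mathcal{H}_{ferm.}$ under the correspondence $\Psi \mapsto \mathbb{\Psi}(\mathbb{c}^*)$, the standard Fock inner product, and then to invoke finite-dimensionality to secure completeness. Since both $\Psi \mapsto \mathbb{\Psi}$ and the Berezin integral are linear (the form being rendered sesquilinear by letting the involution act on the first argument of \eqref{(26)}, which turns $\mathbb{\Psi}_1$ into a polynomial in the $\mathbb{c}$'s with conjugated coefficients), it suffices to check the formula on the monomial basis $\mathbb{c}^*_I \doteq \mathbb{c}^*_{i_1}\cdots \mathbb{c}^*_{i_n}$ with $I = \{i_1 < \cdots < i_n\}$, which corresponds precisely to the occupation-number basis $c^*_{i_1}\cdots c^*_{i_n}\Omega_\circ$ of $\mathcal{H}_{ferm.}$.

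The heart of the computation is the single-mode Gaussian integral. Because ${\rm e}^{-\mathbb{c}\mathbb{c}^*}=\prod_{i} {\rm e}^{-\mathbb{c}_i\mathbb{c}_i^*}$ and $(\mathbb{c}_i\mathbb{c}_i^*)^2=0$, each factor truncates to $1-\mathbb{c}_i\mathbb{c}_i^*$. First I would evaluate, using the integration rules \eqref{(25.1)}--\eqref{(25.2)}, the elementary single-mode integrals
\[
  \int {\rm d}\mathbb{c}_i\,{\rm d}\mathbb{c}_i^*\,{\rm e}^{-\mathbb{c}_i\mathbb{c}_i^*}\,\bigl\{1,\ \mathbb{c}_i^*,\ \mathbb{c}_i,\ \mathbb{c}_i\mathbb{c}_i^*\bigr\},
\]
obtaining $\langle 1,1\rangle_i = \langle \mathbb{c}_i^*,\mathbb{c}_i^*\rangle_i = 1$ and $\langle 1,\mathbb{c}_i^*\rangle_i = \langle \mathbb{c}_i^*,1\rangle_i = 0$, so that $\{1,\mathbb{c}_i^*\}$ is orthonormal for the single-mode form. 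Since the variables and differentials attached to distinct modes anticommute by \eqref{(25.1)}, the full integral factorizes over modes up to a sign produced by reordering the $\mathbb{c}$'s, $\mathbb{c}^*$'s and their differentials into mode-by-mode blocks; carrying out this reordering yields $\langle \mathbb{c}^*_I,\mathbb{c}^*_J\rangle = \delta_{I,J}$.

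With orthonormality of the occupation-number monomials established, it remains to match normalizations: the factor $1/n!^{1/2}$ in the definition of $\mathbb{\Psi}$, together with the antisymmetry of the coefficient functions $K^{(n)}_{\ell_1,\ldots,\ell_n}(p_1,\ldots,p_n)$ (only whose totally antisymmetric part contributes on either side, by \eqref{(21)} and the anticommutation of the $c^*$'s), guarantees that $\Psi\mapsto\mathbb{\Psi}$ is an isometry, i.e.\ \eqref{(26)} equals $\langle\Psi_1,\Psi_2\rangle_{\mathcal{H}_{ferm.}}$. Because Pauli's principle forces $\mathcal{H}_{ferm.}$ to be finite-dimensional, the form \eqref{(26)} is automatically a complete positive-definite inner product, so $\mathcal{L}$ is a (finite-dimensional) Hilbert space and the correspondence is a unitary isomorphism onto $\mathcal{H}_{ferm.}$; this is exactly the content of \cite[Theorem 3.1, p.~83]{Berezin}.

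The step I expect to be most delicate is the sign bookkeeping in the factorization over modes and in the action of the involution on the first argument of \eqref{(26)}: one must check that the anticommutation of the differentials ${\rm d}\mathbb{c}_i,{\rm d}\mathbb{c}_i^*$ with the generators in \eqref{(25.1)}, the ordering convention for the iterated Berezin integral, and the conjugation-plus-order-reversal of the involution conspire to produce exactly $\delta_{I,J}$, with no residual sign. Once these conventions are fixed to agree with Berezin's, the remaining verifications are routine.
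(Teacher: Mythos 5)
You should first be aware that the paper contains no proof of this lemma at all: the statement is justified entirely by the citation of Berezin's Theorem 3.1 (p.~83), the preceding text merely assembling the ingredients (the integration rules \eqref{(25.1)}--\eqref{(25.2)}, the Gaussian weight, the involution). Your proposal therefore takes a genuinely different route --- it reconstructs, in outline, the proof of the cited theorem: reduction to occupation-number monomials, evaluation of the single-mode Gaussian Berezin integrals to obtain orthonormality of $\{1,\mathbb{c}_i^*\}$, factorization over modes, and completeness from finite-dimensionality (which indeed follows from $|\Lambda|<\infty$ and the Pauli principle, as noted in Section~2). The route is sound, and what it buys is self-containedness together with an explicit record of exactly which conventions (ordering of the iterated integrals, anticommutation of the differentials, conjugation-plus-order-reversal in the involution) the identification rests on --- precisely the points you flag as delicate. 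What the paper's citation buys is brevity and the guarantee that those conventions are fixed consistently, since they are taken wholesale from Berezin.

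One caveat: your isometry claim does not hold for the paper's formulas read literally. For totally antisymmetric $K^{(n)}$, the Fock vector in \eqref{(22)} (which carries no $1/\sqrt{n!}$) satisfies $\|\Psi_n\|^2 = n!\sum|K^{(n)}|^2$ (sum over all tuples), whereas orthonormality of the ordered monomials gives $\|\mathbb{\Psi}_n\|^2 = \sum|K^{(n)}|^2$ for the Grassmann element (which does carry the $1/\sqrt{n!}$); the two differ by a factor $n!$ in each particle-number sector. The lemma itself is unaffected --- the correspondence becomes unitary once the $n$-particle term of \eqref{(22)} is given the same $1/\sqrt{n!}$ normalization, which is Berezin's convention and is also what the unit-norm condition following \eqref{(30.2)} presupposes --- but your assertion that the factor $1/n!^{1/2}$ ``guarantees'' isometry silently corrects an inconsistency between the two displays in the paper; in a final write-up this should be stated explicitly rather than absorbed.
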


We next discuss how a bounded operator $A$ acting on $\mathcal{H}_{ferm.}$ is represented 
by an element $\mathbb{A} \in {\cal G}_{n}$, 
which naturally acts on ${\cal L}$. The connection is particularly simple, if $A$  is given in its \emph{normal form} 
	\begin{align}
		A & = \sum_{m,n} \sum_{p_i, p'_j \in \Gamma_\kappa}
		\sum_{\ell_i , \ell_j' \in  \{ 1, 2\}} \; 
		K^{(m, n)}_{\ell_1, \ldots , \ell_m, \ell'_1, \ldots , \ell'_n} 
		(p_{1}, \cdots ,p_{m}, p'_{1}, \cdots, p'_{n}) 
		\nonumber \\
		& \qquad \qquad \qquad \qquad \qquad  \qquad
		\times a_{\ell_1}^{*}(p_{1}) \cdots a_{\ell_m}^{*}(p_{m})
		a_{\ell_1'}(p'_{1}) \cdots a_{\ell_n'}(p'_{n}) \; , 
	\label{(29.1)}
	\end{align} 
as this allows us to represent $A$ by  (see \cite[p.~26]{Berezin}) 
	\begin{align}
			\mathbb{A} (\mathbb{c}^*, \mathbb{c}) & \doteq \sum_{m,n} \sum_{p_i, p'_j \in \Gamma_\kappa}
			\sum_{\ell_i , \ell_j' \in  \{ 1, 2\}} \; K^{(m, n)}_{\ell_1, \ldots , \ell_m, \ell'_1, \ldots , \ell'_n} 
			(p_{1}, \cdots ,p_{m}, p'_{1}, \cdots, p'_{n})
	\label{(29.2)}
			 \\
			& \qquad \qquad \qquad \qquad \qquad \qquad  \times \mathbb{c}_{\ell_1}^{*}(p_{1}) \cdots \mathbb{c}_{\ell_m}^{*}(p_{m}) 
			\mathbb{c}_{\ell_1'}(p'_{1})) \cdots \mathbb{c}_{\ell_n'}(p'_{n}) \; .   
			\nonumber
	\end{align}
In fact, if $\Psi= A \, \Phi $, then $\Psi$ 
corresponds to the element  (using a suggestive notation)
	\begin{equation*}
		\mathbb{\Psi}(\mathbb{c}^{*}) = \int \prod {\rm d} \mathbb{f}^{*} 
		{\rm d} \mathbb{f} \,  {\rm e}^{-(\mathbb{f}^{*}
		-\mathbb{c}^{*}) \mathbb{b}}  \; 
		\mathbb{A}(\mathbb{c}^{*},\mathbb{f}) 
		\mathbb{\Psi}(\mathbb{f}^{*})  \in  {\cal L}  \; ; 
	\end{equation*}
see \cite[Equ.~(3.68), p.~84]{Berezin}. 

\begin{remark}
It is important to notice that the correspondence between \eqref{(29.1)} 
and \eqref{(29.2)} only holds for operators $A$ equal to sums of normal forms, in which $A$ 
are (at most) linear in each of the $b_\ell (p)$, $d_\ell (p)$, $b^*_\ell (p)$ and $d^*_\ell (p)$, 
due to the Pauli exclusion principle. 
\end{remark}

\goodbreak 
Returning to the model introduced in Section \ref{sec:2}, we 
shall consider expressions of the form
	\[
		\langle \mathbb{\Omega} , \mathbb{H} (\Lambda,V) \mathbb{\Omega} \rangle \; , 
	\]
where $\mathbb{\Omega} \in {\cal L} \otimes \mathcal{H}_{bos.}$ is a vector of unit norm of the form 
	\begin{align}
		\mathbb{\Omega} & = \sum_{n,m} \frac{1}{\sqrt{ n! \, m!}} 
		\sum_{p_i , p_j'\in \Gamma_\kappa} 
		\sum_{\ell_i , \jmath_k \in \{  1,2 \} } \; 
		K^{(m, n)}_{\ell_1, \ldots , \ell_m, \jmath_1, \ldots , \jmath_n} 
			(p_{1}, \cdots ,p_{m}, p'_{1}, \cdots, p'_{n})
		\nonumber \\
		& \qquad \qquad \qquad \qquad \qquad 
		\times \mathbb{c}_{\ell_1}^{*}(p_{1}) 
		\cdots \mathbb{c}_{\ell_m}^{*}(p_{m}) \, a_{\jmath_1}^{*}(p'_{1}) 
		\cdots  a_{\jmath_n}^{*}(p'_{n})  \Omega^\circ_{bos.} \; . 
	\label{(30.2)}
	\end{align}
Here $\Omega^\circ_{bos.}$ denotes the vacuum (no-particle state) in $\mathcal{H}_{bos.}$, 
the sums over n and m are finite,  and 
	\[
		\sum_{m,n} \; \sum_{p_i , p_j'\in \Gamma_\kappa} 
		\sum_{\ell_i , \jmath_k \in \{  1,2 \} }
		 \; \bigl| K^{(m, n)}_{\ell_1, \ldots , \ell_m, \jmath_1, \ldots , \jmath_n} 
			(p_{1}, \cdots ,p_{m}, p'_{1}, \cdots, p'_{n}) \bigr|^{2}  = 1 \; .
	\]
The $a_{\jmath}^{*}(k)$'s appearing in \eqref{(30.2)} are the photon creation operators. 

\begin{proposition}
The set of vectors $\mathbb{\Omega}$  of the form 
\eqref{(30.2)} 
form a dense set in $C^{\infty}(  H_\circ (V) )$,  on which   
$H(\kappa)$ is essentially self-adjoint 
by  \cite[Theorem~3.1]{JLW}.  
\end{proposition}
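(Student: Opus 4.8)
The plan is to establish the statement in two stages: first that the finite-particle vectors $\mathbb{\Omega}$ of the form \eqref{(30.2)} are dense in $C^{\infty}(H_\circ(V))$ in its natural Fréchet topology, generated by the seminorms $\psi \mapsto \|H_\circ(V)^{n}\psi\|$ for $n \in \{0,1,2,\dots\}$, and then to deduce from this, together with \cite[Theorem 3.1]{JLW}, that these vectors form a core for the closure of $H(\kappa)$. I would organise the argument around the spectral theorem for the free operator $H_\circ(V)$, whose structure is especially simple under the cutoffs in force.

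First I would record that $H_\circ(V)=H^\circ_{ferm.}(V)+H^\circ_{bos.}(V)$ is non-negative with purely discrete spectrum. Since $|\Lambda|$ is finite, $\mathcal{H}_{ferm.}$ is finite-dimensional and $H^\circ_{ferm.}(V)$ is bounded and diagonal; the bosonic part runs over the finitely many modes $k \in \Gamma_\kappa \setminus \{0\}$, so $H^\circ_{bos.}(V)$ is a finite sum of weighted number operators. The joint eigenvectors are exactly the occupation-number monomials $\mathbb{c}^{*}_{\ell_1}(p_1)\cdots a^{*}_{\jmath_n}(p'_n)\,\Omega^\circ_{bos.}$ occurring in \eqref{(30.2)}. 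Moreover any configuration of energy at most $R$ satisfies $\sum_j n_j |k_j| \le R$, hence $\sum_j n_j \le R/\min_j|k_j|$, leaving only finitely many occupations; thus the spectral projection $P_R := E_{H_\circ(V)}([0,R])$ is of finite rank and its range is spanned by such monomials. Consequently $P_R\psi$ is a vector of the form \eqref{(30.2)} for every $\psi \in \mathcal{H}$.

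Density in $C^{\infty}(H_\circ(V))$ then follows by spectral truncation. For $\psi \in C^{\infty}(H_\circ(V))$ set $\psi_R := P_R\psi$, which lies among the $\mathbb{\Omega}$'s by the preceding step; for every $n$,
\[
  \|H_\circ(V)^{n}(\psi-\psi_R)\|^{2}
  = \int_{(R,\infty)} \lambda^{2n}\, d\|E_\lambda\psi\|^{2} \longrightarrow 0
  \qquad (R \to \infty),
\]
because the full integral $\|H_\circ(V)^{n}\psi\|^{2} = \int_{0}^{\infty} \lambda^{2n}\, d\|E_\lambda\psi\|^{2}$ is finite. Hence $\psi_R \to \psi$ in each seminorm, which is precisely Fréchet-density of the set of $\mathbb{\Omega}$'s in $C^{\infty}(H_\circ(V))$.

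It remains to pass from this to essential self-adjointness on the $\mathbb{\Omega}$'s. By \cite[Theorem 3.1]{JLW} the larger set $C^{\infty}(H_\circ(V))$ is a core for $H(\kappa)$, so it suffices to show that the truncations converge to $\psi$ not only in the free seminorms but also in the graph norm $\|\cdot\| + \|H(\kappa)\,\cdot\|$. The decisive input is that the interaction $W := H_{int}(\kappa) + {:}H_{Coul.}(\kappa){:}$ is infinitesimally $H_\circ(V)$-bounded: on the finite-dimensional $\mathcal{H}_{ferm.}$ the current $J_\kappa$ and the quartic Coulomb kernel are bounded, so ${:}H_{Coul.}(\kappa){:}$ is bounded, while $H_{int}(\kappa)$ is a finite sum of bounded fermionic operators times the photon operators $a_\jmath(k), a^{*}_\jmath(k)$, so that $\|H_{int}(\kappa)\psi\| \le C\,\|(N_{bos.}+\mathbb{1})^{1/2}\psi\| \le C'\,\|(H_\circ(V)+\mathbb{1})^{1/2}\psi\|$, where $N_{bos.}$ is the photon number operator and we used $\min_j|k_j|>0$. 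The elementary operator inequality $(H_\circ(V)+\mathbb{1})^{1/2} \le \varepsilon\,(H_\circ(V)+\mathbb{1}) + \tfrac{1}{4\varepsilon}$ then yields $\|W\psi\| \le a\,\|H_\circ(V)\psi\| + b\,\|\psi\|$ with $a$ arbitrarily small, whence $\|H(\kappa)(\psi-\psi_R)\| \le \|H_\circ(V)(\psi-\psi_R)\| + \|W(\psi-\psi_R)\| \to 0$. Thus the $\mathbb{\Omega}$'s are dense in $C^{\infty}(H_\circ(V))$ in the $H(\kappa)$-graph norm and therefore form a core. The one step genuinely requiring care is this relative-boundedness estimate, since it is what transfers the clean spectral truncation of the free Hamiltonian to control of the full interacting operator on the truncation tail; under the ultraviolet and volume cutoffs it is, however, straightforward, all fermionic factors being bounded and the single photon factor being dominated by $H_\circ(V)^{1/2}$.
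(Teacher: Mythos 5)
The paper itself supplies no proof of this Proposition: the density claim is left implicit and the essential self-adjointness is delegated entirely to the citation of \cite[Theorem~3.1]{JLW}. Your argument is therefore a genuine, self-contained addition rather than a paraphrase, and its architecture is sound: Fr\'echet density by spectral truncation, then transfer of the core property via a relative bound. Note that once you have $\|W\psi\| \le a\,\|H_\circ(V)\psi\| + b\,\|\psi\|$ with $a$ arbitrarily small, the appeal to \cite{JLW} becomes logically superfluous: by Kato--Rellich, $H(\kappa)$ is self-adjoint on $D(H_\circ(V))$ and essentially self-adjoint on every core of $H_\circ(V)$, and the vectors \eqref{(30.2)} form such a core. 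Your route is thus more elementary than the paper's, which invokes machinery built for situations (unbounded interactions, infinite-dimensional fermion space) that the cutoffs here have eliminated.

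However, one step fails as written, because of the $k=0$ photon mode. The bosonic one-particle space is $\ell^{2}(\Gamma_\kappa)$ with $0 \in \Gamma_\kappa$ (the repeated exclusions $\Gamma_\kappa \setminus \{0\}$ in \eqref{(13)} and \eqref{(19)} show that $0$ is a point of $\Gamma_\kappa$), and the $k=0$ term of $H^\circ_{bos.}(V)$ in \eqref{(4.2)} carries the weight $|k|=0$. Thus states with arbitrarily many $k=0$ photons all lie in the kernel of $H^\circ_{bos.}(V)$, so the projection $P_R = E_{H_\circ(V)}([0,R])$ has \emph{infinite} rank for every $R\ge 0$, and $P_R\psi$ need not be a finite sum of the form \eqref{(30.2)}. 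Your inequality $\sum_j n_j \le R/\min_j|k_j|$ presupposes $\min_j|k_j|>0$, which is exactly what fails here; the same defect invalidates the bound $N_{bos.} \le c\, H^\circ_{bos.}(V)$ implicit in your relative-boundedness step. Both are repaired by one device: let $Q_N$ denote the spectral projection of the photon number operator onto $[0,N]$; it commutes with $H_\circ(V)$, the product $P_R Q_N$ \emph{is} of finite rank with range spanned by occupation monomials (energy at most $R$ confines the nonzero modes, particle number at most $N$ confines the zero mode, and $\mathcal{H}_{ferm.}$ is finite-dimensional), and $P_R Q_N\psi \to \psi$ in every seminorm of $C^{\infty}(H_\circ(V))$ and in the graph norm as $R, N \to \infty$. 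For the relative bound, replace $N_{bos.}$ by the number operator of the nonzero modes only --- these are the only modes occurring in $H_{int}(\kappa)$, since $A_\kappa$ in \eqref{(13)} excludes $k=0$ --- and that operator is dominated by $\bigl(\min_{k\in\Gamma_\kappa\setminus\{0\}}|k|\bigr)^{-1} H^\circ_{bos.}(V)$. With these substitutions your proof is complete and correct.
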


\section{The first unitary transformation}

In this section, we apply a unitary transformation to the Hamiltonian 
	\begin{equation}
		\mathbb{H}(\kappa) = \mathbb{H}^\circ_{ferm.}(V) + \mathbb{H}_{Coul.}(\kappa)  
			+ \mathbb{H}_{\, int}(\kappa) + H^\circ_{bos.}(V) \; , 
		\label{(33.1)}
	\end{equation}
where both
	\begin{align*}
		\mathbb{H}^\circ_{ferm.}(V)  & = 
		\sum_{p \in \Gamma_\kappa} \sum_{\ell\in \{1, 2\} } \sqrt{|p|^{2} + m^{2}}
		 \Bigl( \mathbb{b}_\ell^* (p) \mathbb{b}_\ell (p)+\mathbb{d}_\ell^* (p)
		 \mathbb{d}_\ell (p) \Bigr) \otimes \mathbb{1} \; , 
	\end{align*}
and 
	\begin{align*}
		\mathbb{H}_{Coul.}(\kappa)  
		& = e^{2}  \int_{V \times V} {\rm d}^3x  {\rm d}^3y \,   {\cal V}_\kappa(x-y) 
		 \mathbb{\Psi}_\kappa^{*}(x) \mathbb{\Psi}_\kappa(x)  
		\mathbb{\Psi}_\kappa^{*}(y) \mathbb{\Psi}_\kappa(y)  \otimes \mathbb{1} \; , 
	\end{align*}
act trivially on the bosonic Fock space $\mathcal{H}_{bos.}$. 

\goodbreak
Inspecting the parts of $\mathbb{H}(\kappa) $,  which involve bosonic creation and annihilation 
operators, \emph{i.e.}, 
	\begin{align}
		H^\circ_{bos.}(\kappa) & = \sum_{k \in \Gamma_\kappa} \sum_{\jmath \in \{ 1, 2 \}}  
		|k| \; a_\jmath^* (k) a_\jmath (k) \; , 
		\nonumber
		\\
		\mathbb{H}_{\, int}(\kappa) & = - e \int_{V} {\rm d}^3x \; 
		\mathbb{J}_\kappa (x) \cdot 
		\underbrace{  \sum_{k \in \Gamma_\kappa \setminus \{ 0 \}
		\atop \jmath  \in \{1,2\} } 
		\frac{\left( a_{\jmath} (k) {\rm e}^{i k \cdot x} + a_{\jmath}^* (k) {\rm e}^{i k \cdot x} \right)
		\epsilon_{\jmath} (k)}
		{\sqrt{2 V\, |k|}}  \; 
		 }_{ = A_\kappa(x)} \, ,  
		\nonumber
	\end{align}
one may try to ``complete the square'' by adding and subtracting a term of the form 
\color{black}
	\[
		 \mathbb{H}_{Curr.}(\kappa)=  
		e^{2} \sum_{k\in\Gamma_\kappa \setminus \{ 0 \}} \sum_{\jmath\in \{ 1, 2 \}} 
		\frac{\bigl( \widetilde{\mathbb{J}_\kappa}(k) \cdot \epsilon_{\jmath}(k)\bigr) 
		\bigl( \widetilde{\mathbb{J}_\kappa}(-k) \cdot \epsilon_{\jmath}(k)\bigr)}{2|k|^{2}} \; , 
	\]
which is quadratic in $\mathbb{J}_\kappa (x) $. In fact, there is a unitary operator, namely
	\begin{equation*}
		\mathbb{U}_\kappa \doteq \prod_{k\in \Gamma_\kappa \setminus \{ 0 \} \atop \jmath \in \{1, 2 \} }
		\exp \left( 
		- \frac{e \left( \bigl( \widetilde{\mathbb{J}_\kappa}(k) \cdot \epsilon_{\jmath}(k)\bigr) a_{\jmath}^{*} (k)
		- \bigl( \widetilde{\mathbb{J}_\kappa}(k) \cdot \epsilon_{\jmath}(k)\bigr)^* a_{\jmath} (k) \right) }
		{\sqrt{2|k|^3}}
		\right) \; ,  
	\end{equation*}
which accomplishes this task:  

\begin{proposition}
\label{prop:4.1}
As a quadratic form on ${\cal L} \otimes {\cal H}_{bos}$, 
	\begin{align}
				\mathbb{U}_\kappa^{*} \, \bigl( H^\circ_{bos.}(\kappa) +	\mathbb{H}_{\, int}(\kappa) \bigr) \, \mathbb{U}_\kappa 
				& =  \underbrace{ \sum_{\jmath\in \{ 1, 2 \}} \sum_{k \in \Gamma_\kappa} 
			|k| \; \mathbb{a}_{\jmath}^{*} (k) 
			\mathbb{a}_{\jmath} (k) \color{blue} }_{ := \mathbb{H}_{bos.}^{mod.} \ge 0 }  - 
		\mathbb{H}_{Curr.}(\kappa) \; ,
	\label{(38)}
	\end{align}
where the positive first term on the r.h.s.~is build up from \emph{dressed} bosonic creation and annihilation operators  
	\[ 
		\mathbb{a}_{\jmath}^* (k) =   \mathbb{1} \otimes  a_{\jmath}^* (k) 
		+ \frac{\widetilde{\mathbb{J}_\kappa}(k) \cdot \epsilon_{\jmath}(k)}{\sqrt{2|k|^3}} 
		 \otimes \mathbb{1}  
	\]
and
	\[
		\mathbb{a}_{\jmath} (k) =  \mathbb{1} \otimes  a_{\jmath} (k)
		+ \frac{\widetilde{\mathbb{J}_\kappa}(-k) \cdot \epsilon_{\jmath}(k)}{\sqrt{2|k|^3}} 
		\otimes \mathbb{1}  \; .  
 	\]
Moreover, 
	\begin{equation}
	\label{new-22}
		\mathbb{U}_\kappa^{*} \mathbb{H}_{Coul.}(\kappa)   \mathbb{U}_\kappa =  \mathbb{H}_{Coul.}(\kappa) \; .
	\end{equation}
\end{proposition}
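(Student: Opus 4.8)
The plan is to prove Proposition~\ref{prop:4.1} by direct computation, exploiting the fact that $\mathbb{U}_\kappa$ is a product of Weyl (displacement) operators acting on each bosonic mode $(k,\jmath)$, with displacement amplitudes that are elements of the fermionic Grassmann algebra commuting with all bosonic operators. First I would observe that the term in the exponent defining $\mathbb{U}_\kappa$ is anti-self-adjoint, so $\mathbb{U}_\kappa$ is unitary; setting $\beta_\jmath(k) \doteq \frac{e\,(\widetilde{\mathbb{J}_\kappa}(k)\cdot\epsilon_\jmath(k))}{\sqrt{2|k|^3}}$, the operator is formally $\prod_{k,\jmath}\exp\bigl(-\beta_\jmath(k)a_\jmath^*(k)+\beta_\jmath(k)^* a_\jmath(k)\bigr)$. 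Because the different modes commute and the $\beta_\jmath(k)$ act only on the fermionic factor, I can apply the standard Weyl shift formula mode by mode:
\begin{equation*}
	\mathbb{U}_\kappa^* \, a_\jmath(k) \, \mathbb{U}_\kappa = a_\jmath(k) + \beta_\jmath(k)\,/\,? \; .
\end{equation*}
The key step is to get the combinatorics of the shift exactly right: conjugation by $\exp(-\beta a^* + \beta^* a)$ sends $a_\jmath(k)\mapsto a_\jmath(k)+\beta_\jmath(k)$ and $a_\jmath^*(k)\mapsto a_\jmath^*(k)+\beta_\jmath(k)^*$, so that after renaming the shifted operators one recovers precisely the dressed operators $\mathbb{a}_\jmath(k)$, $\mathbb{a}_\jmath^*(k)$ displayed in the statement, where the displacement $\widetilde{\mathbb{J}_\kappa}(\mp k)\cdot\epsilon_\jmath(k)/\sqrt{2|k|^3}$ appears.

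Next I would substitute these shifted operators into $H^\circ_{bos.}(\kappa)+\mathbb{H}_{int}(\kappa)$. Expanding $\sum_{k,\jmath}|k|\,\mathbb{a}_\jmath^*(k)\mathbb{a}_\jmath(k)$ in terms of the undressed $a_\jmath(k)$ produces three kinds of contributions: the original free bosonic energy $H^\circ_{bos.}$; cross terms linear in the $a,a^*$ which must reproduce exactly the interaction $\mathbb{H}_{int}(\kappa)=-e\int\mathbb{J}_\kappa(x)\cdot A_\kappa(x)\,{\rm d}^3x$; and a c-number-in-the-bosons (but fermion-valued) term quadratic in $\widetilde{\mathbb{J}_\kappa}$, which is precisely $-\mathbb{H}_{Curr.}(\kappa)$. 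Matching the cross terms requires rewriting $\mathbb{H}_{int}(\kappa)$ in momentum space, using that $A_\kappa(x)$ is a Fourier series in the $a_\jmath(k),a_\jmath^*(k)$ and that $\int_V {\rm d}^3x\,\mathbb{J}_\kappa(x)\,{\rm e}^{\pm i k\cdot x}$ produces the Fourier coefficient $\widetilde{\mathbb{J}_\kappa}(\mp k)$; the factors of $|k|$ and $\sqrt{2|k|^3}$ then have to combine to give the $\epsilon_\jmath(k)/\sqrt{2|k|}$ weight in $A_\kappa$. Computing the residual quadratic term and checking it equals $\mathbb{H}_{Curr.}(\kappa)$ with the stated coefficient $1/(2|k|^2)$ is the crux of ``completing the square''; here one uses the reality/symmetry relations $\epsilon_\jmath(-k)$ versus $\epsilon_\jmath(k)$ from \eqref{(14.3)} and the hermiticity of the current to pair $\widetilde{\mathbb{J}_\kappa}(k)$ with $\widetilde{\mathbb{J}_\kappa}(-k)$ correctly.

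The hard part will be bookkeeping the $k\leftrightarrow -k$ symmetry and the polarization-vector conventions so that the quadratic remainder assembles into the stated $\mathbb{H}_{Curr.}(\kappa)$ without stray factors or sign errors; this is where the noted subtlety about the choice of polarization vectors and the relation $\epsilon_1(-k)=\epsilon_2(k)$ becomes essential, and it is easy to mismatch a mode index. A secondary technical point is that all of this must be justified as an identity of \emph{quadratic forms} on the domain $\mathcal{L}\otimes\mathcal{H}_{bos.}$ rather than as an operator identity, since $A_\kappa(x)$ is unbounded; I would therefore phrase the Weyl calculus on the dense set of vectors \eqref{(30.2)}, where finitely many bosonic modes are excited and the Grassmann factors are nilpotent, so that every exponential truncates and every sum over $k$ is effectively finite. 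Finally, the invariance \eqref{new-22} is immediate: $\mathbb{H}_{Coul.}(\kappa)$ is built purely from fermionic (Grassmann) operators and acts as $\otimes\,\mathbb{1}$ on $\mathcal{H}_{bos.}$, while $\mathbb{U}_\kappa$ only displaces the bosonic modes by fermion-valued amplitudes; since those amplitudes commute with the purely fermionic $\mathbb{H}_{Coul.}(\kappa)$ and $\mathbb{U}_\kappa$ contains no operator acting nontrivially on the fermionic factor beyond multiplication, the conjugation leaves $\mathbb{H}_{Coul.}(\kappa)$ unchanged.
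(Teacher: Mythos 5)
Your proposal follows essentially the same route as the paper's own proof: the paper likewise reduces \eqref{(38)} to the Weyl-displacement identities $\mathbb{U}_\kappa^{*}\, a_{\jmath}(k)\, \mathbb{U}_\kappa = \mathbb{a}_{\jmath}(k)$ and $\mathbb{U}_\kappa^{*}\, a_{\jmath}^{*}(k)\, \mathbb{U}_\kappa = \mathbb{a}_{\jmath}^{*}(k)$, justified by the observation that the amplitudes $-e\,\widetilde{\mathbb{J}_\kappa}(k)\cdot\epsilon_{\jmath}(k)$ are even Grassmann elements commuting with all other variables, and it obtains \eqref{new-22} exactly as you do, from $\bigl[\mathbb{H}_{Coul.}(\kappa)\, ,\, \widetilde{\mathbb{J}_\kappa}(k)\bigr]=0$. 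Your write-up is in fact more explicit than the paper's (which does not spell out the completing-the-square expansion or the quadratic-form domain), and the sign and $k\leftrightarrow -k$ bookkeeping you flag as delicate corresponds to convention ambiguities already present in the paper's own formulas.
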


\begin{remark}
Note that the action of $\mathbb{a}_{\jmath}^* (k)$ and $\mathbb{a}_{\jmath} (k)$ on $\mathcal{H}_{ferm.}$ is no longer trivial, as 
$U_\kappa$ mixes the components in the tensor product $\mathcal{H}_{ferm.} \otimes \mathcal{H}_{bos.}$. 
\end{remark}

\begin{proof} 
The $-e \widetilde{\mathbb{J}_\kappa}(k) \cdot \epsilon_{\jmath}(k)$ are Grassmann variables, 
which commute with all the other Grassmann variables. Thus, \eqref{(38)} 
is a consequence of 
	\begin{align*}
		\mathbb{U}_\kappa^{*} \, a_{\jmath} (k) \, U_\kappa & = 
		\mathbb{a}_{\jmath} (k) \; , 
		\qquad
		\mathbb{U}_\kappa^{*} \,  a_{\jmath}^{*} (k) \, U_\kappa = 
		\mathbb{a}^*_{\jmath} (k) \; . 
	\end{align*}
The final statement follows from the fact that on $\mathcal{L}$, the Grassmann variables commute. 
Hence, $\bigl[ \mathbb{H}_{Coul.}(\kappa) \, , \, \widetilde{\mathbb{J}_\kappa}(k) \bigr] = 0$.
\end{proof}

In the original representation on $\mathcal{H}_{ferm.}\otimes \mathcal{H}_{bos.}$ 
the current-current interaction $\mathbb{H}_{Curr.}(\kappa)$ is represented (using the 
correspondence \eqref{(29.1)}--\eqref{(29.2)}) by the normal product ${:}H_{Curr.}(\kappa){:}$ of 
	\begin{equation}
		H_{Curr.}(\kappa) \doteq 
		 \sum_{\jmath\in \{ 1, 2 \}} 
		\frac{e^{2}}{2|k|^{2}} \bigl( \widetilde{J_\kappa} (k) \cdot \epsilon_{\jmath}(k) \bigr)
		\bigl( \widetilde{J_\kappa}(-k) \cdot \epsilon_{\jmath}(k) \bigr) \otimes \mathbb{1} \; .
		\label{(39.2)}
	\end{equation}
Recalling \eqref{(3a)}, we conclude that the difference
	\begin{equation}
	\label{23}
		H_{Curr.}^{trunc.}(\kappa) :=
		H_{Curr.}(\kappa) -  {:}H_{Curr.}(\kappa){:} \;  , 
	\end{equation}
\color{black}
consists of a sum of  four  terms 
	\begin{equation}
		\label{24}
		H_{Curr.}^{trunc.}(\kappa) = \sum_{i=1}^{4} 
		 \underbrace{ \left( 
		\sum_{p,k \in \Gamma_\kappa \setminus \{ 0 \}} 
		\sum_{\ell,\ell', \ell'' \in \{ 1, 2 \} } c^{(i)}_{\ell,\ell', \ell''}(p, k) E^{(i)}_{\ell,\ell', \ell''}(p, k) \right)}_{= E_i} \; , 
	\end{equation}
obtained by  replacing  
terms of the form  $b_\ell (p) b_\ell^* (p)$ and $d_\ell (p) d_\ell^* (p)$ occurring in $H_{Curr.}(\kappa)$ by the identity
operator $\mathbb{1}$: let $\mathcal{E} (k)$ denote the $4\times 4 $ matrix
	\[
		\mathcal{E} (k) =  \frac{1}{2V\, |k|^{2} } \sum_{\jmath  \in \{ 1, 2 \} } \sum_{i=1}^3 \epsilon^{(i)}_\jmath (k) \alpha_i \; , 
	\]
then
	\begin{align*}
		E^{(1)}_{\ell,\ell', \ell''}(p, k) &=  	  b_{\ell'}^* (p+k) b_{\ell''} (p-k) \otimes \mathbb{1} \; , 
		\\
		E^{(2)}_{\ell,\ell', \ell''}(p, k) & =      b_{\ell'}^* (p+k) d^*_{\ell''} (k-p)  \otimes \mathbb{1} \; , 
		\\
		E^{(3)}_{\ell,\ell', \ell''}(p, k) &=    d_{\ell'} (-p-k)  b_{\ell''} (p-k) \otimes \mathbb{1} \; ,
		 \\ 
		E^{(4)}_{\ell,\ell', \ell''}(p, k)& =    d_{\ell''}^* (-p+k) d_{\ell'} (-p+k) \otimes \mathbb{1} \; ,
	\end{align*}
with coefficients
	\begin{align}
		c^{(1)}_{\ell,\ell', \ell''}(p, k) &=   
		\langle u_{\ell'} (p+k) , \mathcal{E} (k) u_{\ell} (p) \rangle  
		\langle u_{\ell} (p) , \mathcal{E} (k) u_{\ell''} (p-k) \rangle  \; ,  
		\nonumber \\
		c^{(2)}_{\ell,\ell', \ell''}(p, k) & =  
				 \langle u_{\ell'} (p+k) , \mathcal{E} (k)  u_{\ell} (p) \rangle 
				\langle u_{\ell} (p) , \mathcal{E} (k) v_{\ell''} (k-p)  \rangle \; , 
		\nonumber \\
		c^{(3)}_{\ell,\ell', \ell''}(p, k) &=   
				 \langle u_{\ell} (p),  \mathcal{E} (k) u_{\ell''} (p-k) \rangle 
				  \langle v_{\ell'} (-p-k) , \mathcal{E} (k) u_{\ell} (p)\rangle \; , 
		\nonumber \\
		c^{(4)}_{\ell,\ell', \ell''}(p, k)& =   
		 \langle v_{\ell'} (-p-k) , \mathcal{E} (k) u_{\ell} (p)  \rangle 
		\langle u_{\ell} (p) , \mathcal{E} (k) v_{\ell''} (-p+k) \rangle   \; .
		\label{coeff}
	\end{align}
Note that $E_{1} = E_{1}^{*}$, $E_{3} = E_{2}^{*}$, 
$E_{4}  = E_{4}^{*}$.

\bigskip
\goodbreak
An interesting aspect of the explicit form of 
$H_{Curr.}(\kappa)$ given in \eqref{(39.2)} is that is shows a certain similarity to the Coulomb interaction 
$H_{Coul.}(\kappa)$ introduced in \eqref{(18)}. We can explore this fact, by establishing an 
inequality which generalizes the positivity bound for the Coulomb energy: 

\begin{proposition}
\label{thm:4.1} 
Let $M(k)$ be the $\mathbb{C}^{4} \otimes \mathbb{C}^{4}$-valued matrix given by
	\begin{equation}
		M(k) \equiv \mathbb{1} \otimes \mathbb{1}  
		-\frac{1}{2} \alpha \epsilon_{1}(k) \otimes \alpha \epsilon_{1}(k)
		- \frac{1}{2} \alpha \epsilon_{2}(k) \otimes \alpha \epsilon_{2}(k)
	\label{(43a)}
	\end{equation}
and let $\mathcal{V}$ be (up to a constant $\tfrac{1}{V}$) be 
the Coulomb kernel introduced in \eqref{(19)}. Then
	\begin{equation}
		 \int_{V \times V} {\rm d}^3 x {\rm d}^3 y \; 
		\mathcal{V}(x-y) \Bigl( (\Psi(x)^{*}\otimes \Psi(y)^{*}), M(k)(\Psi(x) \otimes \Psi(y)) \Bigr) \ge 0
	\label{(43)}
	\end{equation}
in the sense of quadratic forms on $\mathcal{H}_{ferm.}$. 
Note that the $\mathbb{C}^{4}$-valued bounded operators
$\Psi(x)$ and $\Psi^{*}(x)$ were defined in \eqref{(11)}. 
\end{proposition}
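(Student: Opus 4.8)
The plan is to isolate the purely algebraic content of the inequality---the positivity of the $\mathbb{C}^4\otimes\mathbb{C}^4$ matrix $M(k)$---and then to transport it to the level of quadratic forms by the very mechanism that makes the Coulomb energy positive, namely the positive-definiteness of the kernel $\mathcal{V}$.

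First I would prove that $M(k)\ge 0$ as a matrix. Writing $\Gamma_j \doteq \alpha\epsilon_j(k)$, the Clifford relations $\{\alpha_i,\alpha_{i'}\}=2\delta_{ii'}\mathbb{1}$ together with $|\epsilon_j(k)|=1$ give $\Gamma_j^*=\Gamma_j$ and $\Gamma_j^2=\mathbb{1}$, so each $\Gamma_j$ is a Hermitian involution. Since $\epsilon_1(k)\cdot\epsilon_2(k)=0$, these involutions anticommute, $\{\Gamma_1,\Gamma_2\}=2\,\epsilon_1(k)\cdot\epsilon_2(k)\,\mathbb{1}=0$, and a direct check then shows $(\Gamma_1\otimes\Gamma_1)(\Gamma_2\otimes\Gamma_2)=(\Gamma_1\Gamma_2)\otimes(\Gamma_1\Gamma_2)=(\Gamma_2\otimes\Gamma_2)(\Gamma_1\otimes\Gamma_1)$, the two signs from anticommuting $\Gamma_1,\Gamma_2$ cancelling. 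Hence $\Gamma_1\otimes\Gamma_1$ and $\Gamma_2\otimes\Gamma_2$ are commuting involutions and can be simultaneously diagonalized; the eigenvalues of $M(k)=\mathbb{1}\otimes\mathbb{1}-\tfrac12\Gamma_1\otimes\Gamma_1-\tfrac12\Gamma_2\otimes\Gamma_2$ are therefore $1-\tfrac12 s_1-\tfrac12 s_2$ with $s_1,s_2\in\{\pm1\}$, i.e.\ they lie in $\{0,1,2\}$. This proves $M(k)\ge0$; equivalently $M(k)=\tfrac12\sum_{j=1,2}(\mathbb{1}\otimes\mathbb{1}-\Gamma_j\otimes\Gamma_j)$ exhibits it as a sum of two positive terms, each equal to $2P_j^+\otimes P_j^-+2P_j^-\otimes P_j^+$ with $P_j^\pm=\tfrac12(\mathbb{1}\pm\Gamma_j)$, which reduces the matter to a single polarisation at a time.

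Second, I would pass from the matrix inequality to the form inequality \eqref{(43)} by exploiting the positive-definiteness of $\mathcal{V}$. The Fourier coefficients of $\mathcal{V}$ are $1/|k'|^2\ge0$, so for any self-adjoint density $\rho$ one has the elementary bound $\int_{V\times V}\mathcal{V}(x-y)\rho(x)\rho(y)=\tfrac1V\sum_{k'\neq0}|k'|^{-2}\,\hat\rho(k')^*\hat\rho(k')\ge0$; this is precisely the positivity of the Coulomb energy. The generalisation replaces the scalar charge density by the matrix-valued density $\Psi^*(x)\Psi(x)$, which is itself a positive-semidefinite $\mathbb{C}^4$-matrix for every $x$. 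Decomposing $M(k)=M(k)^{1/2}M(k)^{1/2}$ and inserting the Fourier representation of $\mathcal{V}$, I would organise the left-hand side of \eqref{(43)} as $\tfrac1V\sum_{k'\neq0}|k'|^{-2}\sum_\alpha \mathcal{O}_\alpha(k')^*\mathcal{O}_\alpha(k')$, where each $\mathcal{O}_\alpha(k')$ is the Fourier component of a field bilinear dictated by an eigenvector of $M(k)$; positivity then becomes manifest term by term.

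The hard part is this last step: reconciling the positivity of the matrix $M(k)$ with that of the Coulomb kernel at the level of the second-quantised form. Because $\mathcal{V}$ is positive-\emph{definite} but \emph{not} pointwise positive, and because the two $\mathbb{C}^4$ factors of $M(k)$ are tied to the two field insertions at $x$ and $y$, one cannot simply invoke $M(k)\ge0$ pointwise in $(x,y)$; the positivity only emerges after the $x$- and $y$-integrations have assembled the bilinears into the modes $\mathcal{O}_\alpha(k')$ and their adjoints. Keeping track of the fermionic reordering and of the precise pairing between the $\mathbb{C}^4\otimes\mathbb{C}^4$ indices and the operator ordering---so that the anticommutation signs and the equal-point contractions combine into a genuine sum of squares rather than an indefinite expression---is where the real work lies, and is the step I expect to be most delicate. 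The structural fact that ultimately ties everything together is that $\Psi^*(x)\Psi(x)\ge0$ as a matrix, the operator-level expression of the pointwise domination of the current density by the charge density that $M(k)\ge0$ encodes.
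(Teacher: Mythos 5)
Your first step---the matrix inequality $M(k)\ge 0$---is correct and coincides with the paper's own argument: your identity $M(k)=\sum_{j}\bigl(P_j^{+}\otimes P_j^{-}+P_j^{-}\otimes P_j^{+}\bigr)$ with $P_j^{\pm}=\tfrac12\bigl(\mathbb{1}\pm\alpha\epsilon_j(k)\bigr)$ is exactly the paper's rewriting of $4M(k)$ into the four summands $\bigl(\mathbb{1}\mp\alpha\epsilon_j(k)\bigr)\otimes\bigl(\mathbb{1}\pm\alpha\epsilon_j(k)\bigr)$, and your simultaneous diagonalization of the commuting involutions is a tidier version of the paper's diagonalization of those summands (eigenvalues $0$ and $2$ in each factor, spectrum of $M(k)$ contained in $\{0,1,2\}$).

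The genuine gap is your second step, and it is not merely ``delicate'': the completion you sketch provably cannot work. As you yourself note, pointwise positivity in $(x,y)$ is unavailable, since $\mathcal{V}_\kappa$ has no zero mode and hence is not a nonnegative function. Your fallback is a mode-by-mode sum of squares, $\tfrac1V\sum_{k'\neq0}|k'|^{-2}\sum_\alpha\mathcal{O}_\alpha(k')^{*}\mathcal{O}_\alpha(k')$. But the mode-$k'$ contribution of a term $A\otimes B$ (with $A,B$ Hermitian, as in your projection decomposition) is $\hat J_A(k')\,\hat J_B(k')^{*}$, where $\hat J_C(k')=\int{\rm d}^3x\,{\rm e}^{ik'\cdot x}\,\Psi^{*}(x)C\Psi(x)$; this is a manifest square only for terms of the form $A^{*}\otimes A$, i.e.\ only when the \emph{same} bilinear sits at $x$ and at $y$. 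And $M(k)$ is \emph{not} a positive combination of such terms: the functional $\phi(X\otimes Y)\doteq\operatorname{tr}\bigl(X\,\alpha\epsilon_1(k)\bigr)\operatorname{tr}\bigl(Y\,\alpha\epsilon_1(k)\bigr)$ satisfies $\phi(A^{*}\otimes A)=\bigl|\operatorname{tr}\bigl(A\,\alpha\epsilon_1(k)\bigr)\bigr|^{2}\ge 0$ for every $A$, whereas, using $\operatorname{tr}\bigl(\alpha\epsilon_1(k)\bigr)=0$ and $\operatorname{tr}\bigl(\alpha\epsilon_1(k)\,\alpha\epsilon_2(k)\bigr)=0$, one finds $\phi\bigl(M(k)\bigr)=-\tfrac12\bigl[\operatorname{tr}\bigl((\alpha\epsilon_1(k))^{2}\bigr)\bigr]^{2}=-8<0$. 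Hence no family $\mathcal{O}_\alpha(k')$ of the kind you describe exists; every decomposition necessarily leaves cross terms $XY^{*}+YX^{*}$ with $X\neq Y$, and those are indefinite in general. So the proposal stops exactly where the proof has to begin.

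For comparison, the paper closes (or rather compresses into one line) this step by a different mechanism, which your own decomposition puts within reach: in each summand $4P_j^{\mp}\otimes P_j^{\pm}$ \emph{both} factors are positive and diagonal in a common basis, so after rotating the field, the summand contributes $\sum_{a,b}\lambda_a\mu_b\int{\rm d}^3x\,{\rm d}^3y\;\mathcal{V}(x-y)\,\rho_a(x)\rho_b(y)$ with $\lambda_a,\mu_b\ge0$, where the $\rho_a,\rho_b$ are positive densities built from orthogonal, hence mutually anticommuting, spinor components of $\Psi$, so that all these densities commute with one another. It is this commuting-positive-densities structure, not a sum of squares, that the paper's ``the inequality then follows'' appeals to. To turn your write-up into a proof you would have to supply precisely that argument---and note that it, too, still requires work, since for two \emph{different} commuting positive densities the positive-definiteness of $\mathcal{V}$ alone does not yield $\int\mathcal{V}\rho_a\rho_b\ge0$; that is the very subtlety your final paragraph circles around without resolving.
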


\begin{proof} 
We may write $M(k)$ in the form
	\begin{align*}
		4M(k)& =\Bigl( (\mathbb{1}-\alpha \epsilon_{1}(k)) \otimes(\mathbb{1}+\alpha \epsilon_{1}(k))
		+(\mathbb{1}+\alpha  \epsilon_{1}(k)) \otimes(\mathbb{1}-\alpha  \epsilon_{1}(k)) \Bigr)
		\nonumber \\
			& 
		\quad + \Bigl((\mathbb{1}-\alpha  \epsilon_{2}(k)) \otimes(\mathbb{1}+\alpha  \epsilon_{2}(k))
		+(\mathbb{1}+\alpha \epsilon_{2}(k)) \otimes(\mathbb{1}-\alpha \epsilon_{2}(k)) \Bigr) \; . 
	\end{align*}
By a unitary transformations on can convert each of the four summands in $4M(k)$ to a 
direct product of diagonal matrices $D_{1,i} \otimes D_{2,i}, i=1, \cdots, 4$. 
The eigenvalues of each $D_{1,i},D_{2,i}, i=1, \cdots, 4$ are zero and two, each with 
multiplicity two, as may be verified by straightforward diagonalisation, independent of $k$, 
because they depend only on the Euclidean norms $|\epsilon_{1}(k)|$ and $|\epsilon_{2}(k)|$, 
which are both equal to one. The inequality~\eqref{(43)} then follows.  
\end{proof}

\begin{corollary}
\label{Corollary 4.1}
$H_{CC} \doteq H_{Coul.}(\kappa) - H_{Curr.}(\kappa) \ge 0$.
\end{corollary}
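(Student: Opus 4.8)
The plan is to rewrite the difference $H_{CC}=H_{Coul.}(\kappa)-H_{Curr.}(\kappa)$, mode by mode on the torus, as the quadratic form governed by the matrix $M(k)$ of \eqref{(43a)}, and then to invoke Proposition~\ref{thm:4.1}. First I would insert the kernel \eqref{(19)} into \eqref{(18)} to obtain
\[
H_{Coul.}(\kappa)=\frac{e^{2}}{V}\sum_{k\in\Gamma_\kappa\setminus\{0\}}\frac{1}{|k|^{2}}\;\widetilde{\rho_\kappa}(-k)\,\widetilde{\rho_\kappa}(k),\qquad\rho_\kappa(x)=\Psi_\kappa^{*}(x)\Psi_\kappa(x),
\]
so that the mode-$k$ Coulomb term is exactly the density bilinear attached to the matrix $\mathbb{1}\otimes\mathbb{1}$, i.e.\ the first summand of $M(k)$.

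Next I would treat $H_{Curr.}(\kappa)$. By \eqref{(3a)}, the scalar $\widetilde{J_\kappa}(k)\cdot\epsilon_\jmath(k)$ is the Fourier coefficient of the density $\langle\Psi_\kappa^{\dag}(x),(\alpha\cdot\epsilon_\jmath(k))\Psi_\kappa(x)\rangle$ attached to the Hermitian matrix $\alpha\cdot\epsilon_\jmath(k)$. Since $\bigl(\epsilon_1(k),\epsilon_2(k),\hat k\bigr)$ with $\hat k=k/|k|$ is an orthonormal frame, the completeness relation $\sum_\jmath\epsilon_\jmath^{a}(k)\epsilon_\jmath^{b}(k)=\delta^{ab}-\hat k^{a}\hat k^{b}$ turns the polarisation sum in \eqref{(39.2)} into the transverse current--current interaction. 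Each of its modes then carries the \emph{same} weight $\tfrac{e^{2}}{V|k|^{2}}$ as the corresponding Coulomb mode, but is attached to the matrix $\tfrac12\sum_\jmath\alpha\epsilon_\jmath(k)\otimes\alpha\epsilon_\jmath(k)$, i.e.\ the second summand of $M(k)$.

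Subtracting the two expansions, the mode-$k$ contribution to $H_{CC}$ is $\tfrac{e^{2}}{V|k|^{2}}$ times the bilinear $\bigl((\Psi^{*}(x)\otimes\Psi^{*}(y)),M(k)(\Psi(x)\otimes\Psi(y))\bigr)$ integrated against the single Fourier kernel $e^{ik(x-y)}$. Because $e^{ik(x-y)}=h(x)\overline{h(y)}$ with $h(x)=e^{ikx}$, this one-mode kernel is positive definite, just like the full Coulomb kernel $\mathcal V$ in \eqref{(43)}; and since the proof of Proposition~\ref{thm:4.1} uses nothing beyond $M(k)\ge0$ together with positive-definiteness of the kernel, it should apply to each mode separately. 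Summing the resulting nonnegative contributions over $k\in\Gamma_\kappa\setminus\{0\}$ would then give $H_{CC}\ge0$.

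The main obstacle is the algebra of the first two steps, together with one delicate analytic point. One must pin down the Fourier and normalisation conventions so that the Coulomb and transverse-current modes genuinely share the weight $\tfrac{e^{2}}{V|k|^{2}}$ and recombine precisely into $M(k)$ of \eqref{(43a)}; and one must verify that the \emph{non-normal-ordered} products in \eqref{(18)} and \eqref{(39.2)} match, term for term, the non-normal-ordered bilinear controlled by Proposition~\ref{thm:4.1}, so that the contraction (exchange) pieces isolated in \eqref{24} are accounted for on both sides. The point demanding the most care is the transfer of the kernel-positivity argument of Proposition~\ref{thm:4.1} from the full Coulomb kernel to a single Fourier mode: this is plausible because each $e^{ik(x-y)}/|k|^{2}$ is itself positive definite, but it is precisely here that the reduction must be justified rather than assumed. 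Once this identification is secured, the inequality is inherited directly from the matrix bound $M(k)\ge0$, which is the substantive input furnished by Proposition~\ref{thm:4.1}.
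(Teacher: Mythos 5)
Your proposal is correct and follows essentially the same route as the paper: the paper's proof likewise combines $H_{Coul.}(\kappa)$ from \eqref{(18)} and $H_{Curr.}(\kappa)$ from \eqref{(39.2)} into the quadratic form on the l.h.s.\ of \eqref{(43)} and invokes Proposition~\ref{thm:4.1}. Your mode-by-mode reading, with the rank-one positive-definite kernel ${\rm e}^{ik\cdot(x-y)}$ paired with $M(k)\ge 0$ for each $k$, is exactly the correct way to interpret the paper's one-line ``inspection,'' since $M(k)$ carries the mode label $k$ while $\mathcal{V}$ sums over all modes; you have merely made explicit the bookkeeping the paper leaves implicit.
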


\begin{proof} Inspecting \eqref{(43a)}, we see that $H_{Coul.}(\kappa)$ as defined in \eqref{(18)} and 
$H_{Curr.}(\kappa) $ defined in \eqref{(39.2)} 
can be combined to yield the expression on the l.h.s.~in \eqref{(43)}. 
\end{proof}

We will also need a bound on general Fermi operators.  
As before, the Fermi creation operators $c^*_i$ and annihilation operators $c_j$ 
will satisfy
canonical anti-commutation relations, \emph{i.e.},  $\{c_i,c_k^{*}\} = \delta_{i,k}$ 
and the remaining anti-commutators are all equal to zero. 

\begin{lemma}[Bogoliubov \&~Bogoliubov \cite{BoBo}] Let $F$ be a self-adjoint fermionic Hamiltonian of the general form
	\begin{equation}
		F  
		= \sum_{i,j=1}^n A_{i,j} c_i^{*} c_j +\frac{1}{2}\sum_{i,j=1}^n B_{i,j} c_i^{*}c_j^{*}  
		+ \frac{1}{2} \sum_{i,j=1}^n B_{i,j}^{*} c_i c_j \; , 
	\label{(48)}
	\end{equation}
where the matrix~$A= (A_{i,j})$ is hermitian  
and the matrix $B= (B_{i,j})$ is anti-symmetric, \emph{i.e.}, 
	\[
		A=A^* \; , \qquad  B  = -B^{T} \; , 
		\qquad  \text{(\,$T$ denotes the transpose)} \; . 
	\]
Then there exists a constant $\mu$ such that 
$F + \mu  \cdot \mathbb{1} \ge 0$.
\end{lemma}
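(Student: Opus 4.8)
The plan is to diagonalize the quadratic fermionic Hamiltonian $F$ via a Bogoliubov transformation and read off the lower bound from the transformed operator. First I would collect the generators $c_1,\dots,c_n,c_1^*,\dots,c_n^*$ into a single column vector $C = (c_1,\dots,c_n,c_1^*,\dots,c_n^*)^T$ of length $2n$, so that, up to an additive constant coming from the reordering of creation and annihilation operators, $F$ can be written in the compact form $F = \tfrac12\, C^* \mathcal{M} \, C + \text{const} \cdot \mathbb{1}$, where the $2n\times 2n$ coefficient matrix is
\[
	\mathcal{M} = \begin{pmatrix} A & B \\ B^* & -A^T \end{pmatrix} .
\]
The hermiticity $A = A^*$ and antisymmetry $B = -B^T$ assumed in the hypotheses are exactly what is needed to ensure that $\mathcal{M}$ is hermitian and respects the particle--hole symmetry of the canonical anticommutation relations; this structure is what makes the Bogoliubov diagonalization possible.

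Next I would invoke the Bogoliubov transformation itself. The CAR algebra admits a unitary $\mathbb{U}$ implementing a linear change of generators $d_i = \sum_j (u_{ij} c_j + v_{ij} c_j^*)$, with the new operators $d_i, d_i^*$ again satisfying canonical anticommutation relations, precisely when the transformation matrix is (complex) symplectic with respect to the form preserved by the CAR. The content of the Bogoliubov--Bogoliubov theorem \cite{BoBo} is that one can choose such a transformation diagonalizing $\mathcal{M}$, bringing $F$ into the normal form
\[
	F = \sum_{i=1}^n \lambda_i \, d_i^* d_i + E_0 \cdot \mathbb{1} ,
\]
where the $\lambda_i \ge 0$ are (the moduli of) the eigenvalues of $\mathcal{M}$ grouped in $\pm$ pairs by the particle--hole symmetry, and $E_0$ is a scalar absorbing both the reordering constant and the shift produced by diagonalization. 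Since each term $d_i^* d_i \ge 0$ and the $\lambda_i$ are nonnegative, one immediately obtains $F \ge E_0 \cdot \mathbb{1}$, so the claim holds with $\mu = -E_0$, i.e. $F + \mu \cdot \mathbb{1} \ge 0$.

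The main obstacle, and the step deserving the most care, is justifying that the diagonalizing Bogoliubov transformation exists and that the resulting $\lambda_i$ are genuinely real and nonnegative. In the bosonic analogue this requires a positivity/definiteness hypothesis and can fail, but in the fermionic case the relevant quadratic form is elliptic and the matrix $\mathcal{M}$ is hermitian, so its spectrum is automatically real; the particle--hole symmetry $\Sigma \mathcal{M} \Sigma^{-1} = -\overline{\mathcal{M}}$ (with $\Sigma$ the swap between the two $n$-blocks) forces eigenvalues to come in pairs $\{\lambda, -\lambda\}$, and selecting the nonnegative representative of each pair as the occupation frequency $\lambda_i$ gives the sum of nonnegative number operators above. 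Because $F$ is self-adjoint and, by the essential self-adjointness established earlier, has a well-defined bounded-below closure, the formal manipulation with the constant $E_0$ is legitimate. Since $n$ is finite here (the Grassmann algebra has finite dimension $2^{8|\Lambda|}$), there are no domain subtleties and the existence of $\mu$ is unconditional, which is all the statement asserts.
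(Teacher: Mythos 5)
Your proposal is correct and follows essentially the same route as the paper: both invoke the Bogoliubov--Bogoliubov diagonalization theorem to bring $F$ into the normal form $\sum_\ell \lambda_\ell\, q_\ell^* q_\ell + \text{const}\cdot\mathbb{1}$ with $\lambda_\ell \ge 0$ and new canonical Fermi operators, from which positivity up to a constant is immediate. The only quibble is terminological: the group preserving the CAR under a linear Bogoliubov transformation is orthogonal (particle--hole symmetric unitary), not symplectic --- symplectic is the bosonic (CCR) case --- but since everything here is finite-dimensional this does not affect the validity of the argument.
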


\begin{proof}

It has been shown in \cite{BoBo} that a Hamiltonian of the form \eqref{(48)} can always be written as 
	\begin{equation}
		F
		= \sum_{\ell = 1}^{n} \lambda_\ell \, q^{*}_\ell q_\ell + \mu \cdot \mathbb{1} \; , 
		\qquad \lambda_\ell \ge 0 \; , 
	\label{(56)}
	\end{equation}
with new Fermi creation and annihilation operators
	\begin{equation}
	\label{U-V}
		q^{*}_\ell  = \sum^n_{ i=  1} 
		\bigl( \overline{U_{i, \ell}}  \, c_{i}^{*} + 
		\overline{W_{i, \ell}} \, c_{i} \bigr) \; , 
	\quad
		q_\ell  = \sum^n_{ i=  1} 
		\bigl(U_{i, \ell}  \, c_{i} + W_{i, \ell}  \, c_{i}^{*} \bigr) \; , 
	\end{equation}
and the new Fermi operators $q_\ell , q^*_{\ell'}$ satisfying canonical anti-commutation relations. 
The matrices $U = (u_{i, \ell}) $ and $W=(v_{i,\ell}) $ appearing in \eqref{U-V}
satisfy 
	\begin{equation}
		A U + B W^{*} = U \Lambda \; , 
		\qquad
		-B^{*} U - A^{*} W^{*} = W^{*} \Lambda \; ,  
	\label{(58)}
	\end{equation}
and $\mu$ is given by $\mu = - \sum_{i=1}^{n} \lambda_{i} \operatorname{Tr} W^* W  $ and $\Lambda$ denotes
the diagonal $n \times n$ matrix of the $\{\lambda_{j}\}$.
\end{proof}

\begin{theorem}
\label{thm:5.2} 
There exists a real number  $\infty> \mu(\kappa)>0$ (the vacuum energy renormalization)
such that the first renormalized Hamiltonian 
	\[
		H_{1, ren.}(\kappa) \equiv H(\kappa) + \mu(\kappa) \cdot \mathbb{1} \; , 
	\]
where $H(\kappa)$, 
the original Hamiltonian describing qed in the
Coulomb gauge, introduced in \eqref{(1)}, satisfies
	\begin{equation}
		H _{1, ren.}(\kappa) \ge 0 \; , 
		\label{(64)}
	\end{equation}
as a quadratic form on ${\mathcal H}_{ferm.} \otimes {\mathcal H}_{bos.}$.
\end{theorem}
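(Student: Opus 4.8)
The plan is to diagonalise the problem sector by sector: first remove the photon--fermion coupling with the unitary $\mathbb{U}_\kappa$ of Proposition~\ref{prop:4.1}, paying for it with the current--current interaction; then cancel the resulting \emph{quartic} fermionic operator against the instantaneous Coulomb term by means of the positivity bound of Corollary~\ref{Corollary 4.1}; and finally observe that what survives is a purely \emph{quadratic} fermionic Hamiltonian, to which the Bogoliubov--Bogoliubov lemma applies verbatim. Because $\mathbb{U}_\kappa$ is unitary, the desired bound $H(\kappa)+\mu(\kappa)\cdot\mathbb{1}\ge 0$ is equivalent to $\mathbb{U}_\kappa^{*}H(\kappa)\mathbb{U}_\kappa+\mu(\kappa)\cdot\mathbb{1}\ge 0$, so I would work throughout with the transformed form on $\mathcal{L}\otimes\mathcal{H}_{bos.}$.

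First I would conjugate \eqref{(33.1)}. Combining \eqref{(38)} with \eqref{new-22}, and using that in the Grassmann realisation $\mathcal{L}$ the current $\widetilde{\mathbb{J}_\kappa}(k)$ is an even, hence central, element -- the very fact invoked in the proof of Proposition~\ref{prop:4.1} to get $[\mathbb{H}_{Coul.},\widetilde{\mathbb{J}_\kappa}(k)]=0$ -- the same argument gives $[\mathbb{H}^\circ_{ferm.},\widetilde{\mathbb{J}_\kappa}(k)]=0$, so the bosonic displacement $\mathbb{U}_\kappa$ leaves $\mathbb{H}^\circ_{ferm.}$ invariant as well. Hence
\[
  \mathbb{U}_\kappa^{*}\,\mathbb{H}(\kappa)\,\mathbb{U}_\kappa
  = \mathbb{H}^\circ_{ferm.}(V) + \mathbb{H}^{mod.}_{bos.} - \mathbb{H}_{Curr.}(\kappa) + \mathbb{H}_{Coul.}(\kappa),
  \qquad \mathbb{H}^{mod.}_{bos.}\ge 0 .
\]
Translating back to $\mathcal{H}_{ferm.}\otimes\mathcal{H}_{bos.}$ via the correspondence \eqref{(29.1)}--\eqref{(29.2)}, which maps normal forms to Grassmann monomials, the monomials $\mathbb{H}_{Curr.}$ and $\mathbb{H}_{Coul.}$ become the normal products ${:}H_{Curr.}(\kappa){:}$ and ${:}H_{Coul.}(\kappa){:}$. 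Dropping the non-negative dressed-photon term $\mathbb{H}^{mod.}_{bos.}$, it remains to bound below the purely fermionic operator $F(\kappa):=H^\circ_{ferm.}(V)+{:}H_{Coul.}(\kappa){:}-{:}H_{Curr.}(\kappa){:}$.

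Next I would reintroduce the Wick contractions to expose Corollary~\ref{Corollary 4.1}. Setting $C_{Coul.}:=H_{Coul.}-{:}H_{Coul.}{:}$ for the Coulomb normal-ordering correction and recalling from \eqref{23}--\eqref{24} that $H_{Curr.}-{:}H_{Curr.}{:}=\sum_{i=1}^4 E_i$, one has
\[
  {:}H_{Coul.}{:}-{:}H_{Curr.}{:}
  = \bigl(H_{Coul.}-H_{Curr.}\bigr) - C_{Coul.} + \sum_{i=1}^{4}E_i
  \ge - C_{Coul.} + \sum_{i=1}^{4}E_i ,
\]
since $H_{Coul.}-H_{Curr.}\ge 0$ by Corollary~\ref{Corollary 4.1}. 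Therefore $F(\kappa)\ge H^\circ_{ferm.}(V)-C_{Coul.}+\sum_{i=1}^4 E_i=:G(\kappa)$, and the decisive gain is that the quartic interaction has cancelled: $H^\circ_{ferm.}$ is the bilinear number operator \eqref{(4.1)}; $C_{Coul.}$, being the contraction of a quartic monomial, is bilinear in the Fermi fields plus a constant; and each $E_i$ is manifestly bilinear by \eqref{24}. Thus $G(\kappa)$ is a self-adjoint fermionic Hamiltonian at most quadratic in $b_\ell,d_\ell,b_\ell^{*},d_\ell^{*}$, i.e.\ of the Bogoliubov form \eqref{(48)}: the number-conserving pieces $E_1,E_4$ together with the $c^{*}c$ part of $C_{Coul.}$ furnish the hermitian matrix $A$, while the pairing terms $E_2,E_3=E_2^{*}$ and the $c^{*}c^{*}$ part of $C_{Coul.}$ furnish the antisymmetric matrix $B$. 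The Bogoliubov--Bogoliubov lemma then yields a finite $\mu(\kappa)$ with $G(\kappa)+\mu(\kappa)\cdot\mathbb{1}\ge 0$, and hence $\mathbb{U}_\kappa^{*}H(\kappa)\mathbb{U}_\kappa+\mu(\kappa)\cdot\mathbb{1}\ge \mathbb{H}^{mod.}_{bos.}+G(\kappa)+\mu(\kappa)\cdot\mathbb{1}\ge 0$.

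The work here is structural rather than computational, and the step I expect to require the most care is the bookkeeping of the two preceding paragraphs: one must check that carrying \eqref{(38)} back through \eqref{(29.1)}--\eqref{(29.2)} genuinely produces the \emph{normal-ordered} operators ${:}H_{Curr.}{:}$ and ${:}H_{Coul.}{:}$ -- so that the splitting against Corollary~\ref{Corollary 4.1} is the correct one -- and that the assembled bilinear form $G(\kappa)$ has exactly the symmetries $A=A^{*}$, $B=-B^{T}$ demanded by \eqref{(48)}. Both follow automatically from self-adjointness of $H(\kappa)$ and from finite-dimensionality of $\mathcal{H}_{ferm.}$ (Pauli's principle, since $|\Lambda|<\infty$), which also removes any domain question, so the lemma applies without qualification. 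Strict positivity $\mu(\kappa)>0$ then holds as soon as the Bogoliubov matrix $W$ of \eqref{U-V} is non-trivial, i.e.\ whenever the pairing contributions $E_2,E_3$ do not all vanish -- the generic case.
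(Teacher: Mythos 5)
Your proposal reproduces the paper's own proof essentially step by step: you use Proposition~\ref{prop:4.1} to trade the photon--fermion coupling for $-\mathbb{H}_{Curr.}(\kappa)$, drop the positive dressed-photon term, pass back to $\mathcal{H}_{ferm.}$ where the Grassmann expressions become the normal-ordered ${:}H_{Curr.}(\kappa){:}$ and ${:}H_{Coul.}(\kappa){:}$, split off $H_{CC}=H_{Coul.}(\kappa)-H_{Curr.}(\kappa)\ge 0$ via Corollary~\ref{Corollary 4.1}, and apply the Bogoliubov--Bogoliubov lemma to the quadratic remainder --- indeed your $G(\kappa)$, with $C_{Coul.}=\sum_{i}E^{tr.}_{i}$, is exactly the paper's $H^{mod.}_{ferm.}(\kappa)$. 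The only cosmetic deviation is that you conjugate the full Hamiltonian and therefore also need $\mathbb{U}_\kappa^{*}\,\mathbb{H}^{\circ}_{ferm.}(V)\,\mathbb{U}_\kappa=\mathbb{H}^{\circ}_{ferm.}(V)$, but this rests on the same ``even Grassmann elements commute on $\mathcal{L}$'' argument that the paper itself invokes to obtain \eqref{new-22} and, implicitly, the inequality \eqref{(*)}, so your argument sits at exactly the paper's level of rigor.
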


\begin{proof} It follows from \eqref{(33.1)}, \eqref{(38)} and \eqref{new-22} that 
	\begin{equation}
		\mathbb{H}(\kappa) \ge \mathbb{H}^\circ_{ferm.}(V) + \mathbb{H}_{Coul.}(\kappa)  
			- \mathbb{H}_{Curr.}(\kappa) \; , 
		\label{(*)}
	\end{equation}
as operators on ${\cal L} \otimes {\cal H}_{bos}$.  
Moreover, by the correspondence \eqref{(29.1)}--\eqref{(29.2)}, $\mathbb{H}_{Coul.}(\kappa)$ 
is represented on ${\cal H}_{ferm}$ by

	\[
		  {:}H_{Coul.}(\kappa){:} = H_{Coul.}(\kappa) 
		- \sum_{i=1}^{4} E^{tr.}_i   \; , 
	\]
where $E^{trunc.}_i $ arises by replacing the term 
$\frac{1}{2} \sum_{\jmath} \sum_i \alpha_i \epsilon^{(i)}_{\jmath}( k)$ in the explicit expressions for the $E_i$'s  
in \label{E-i} appearing in \label{(24)} by the identity. 

\goodbreak
Coming back to the original representation on ${\cal H}_{ferm} \otimes {\cal H}_{bos}$, taking into account that the operator
on the r.h.s.~of \eqref{(*)} acts trivially on ${\cal H}_{bos}$, and using \eqref{23}, \eqref{24} and \eqref{coeff}, 
\eqref{(*)} and the above-mentioned representation
for $\mathbb{H}_{Coul.}(\kappa)$, it follows that
        \begin{align}
		 H(\kappa) & \ge  H^\circ_{ferm.}(V)
		- \underbrace{ \bigl( H_{Curr.}(\kappa) - \sum_{i=1}^{4} E_i(\kappa) \bigr)}_{ = \, {:} \, H_{Curr.}(\kappa) \,{:} }
		 \nonumber \\
		 & \qquad \qquad \qquad \qquad \qquad \qquad \qquad \qquad  
		 + 
		\underbrace{H_{Coul.}(\kappa) -  \sum_{i=1}^{4} E^{tr.}_i (\kappa)}_{ = \, {:} \, H_{Coul.}(\kappa) \,{:} } 
		\nonumber \\
		& \ge 	\underbrace{ H^\circ_{ferm.}(V)+ 
		\sum_{i=1}^{4} \bigl( E_i (\kappa)
		-  E^{tr.}_i (\kappa) \bigr) }_{ =: H^{mod.}_{ferm.}(\kappa)
		\ge - \mu (\kappa) \cdot \mathbb{1} }
		+  \underbrace{ H_{Coul.}(\kappa) - H_{Curr.}(\kappa)}_{= H_{CC} \ge 0 } 
		\nonumber \\
		& \ge  - \mu (\kappa) \cdot \mathbb{1} \; . 
		\label{h-k-b}
	\end{align}
\color{black}
In the last inequality we have used Corollary \ref{Corollary 4.1}
and the fact that according to its definition given in~\eqref{h-k-b},  
the Fermi operator $H^{mod.}_{ferm.} (\kappa)$ is 
of the general form described in~\eqref{(48)}. 
(Note that we have replaced $H^\circ_{ferm.} (V)$ by $H^\circ_{ferm.} (\kappa)$; the latter is 
the free fermion Hamiltonian  with the 
restriction to $k \in \Gamma_\kappa$.). 
\end{proof} 

\begin{remark}
The bound holds for all finite $V$ and $\Lambda$.  
In particular, if $E(\kappa)$ denotes the lowest eigenvalue of~$H (\kappa) + \mu(\kappa)$, then the lowest
accumulation point  of the sequence $\{ E(\kappa) \}$ is positive.
\end{remark}

\begin{remark}
As remarked above, the form of the representation of $\mathbb{H}_{Coul.}(\kappa)$ on ${\cal H}_{ferm}$ 
is simply due to the Wick dots in \eqref{(18)}. This choice is, on the other hand, motivated by
the fact that in perturbation theory there are cancellations between the instantaneous Coulomb interaction \eqref{(18)} and the transverse
term \eqref{(3)}, leading to a final covariant propagator (\cite{JJS}, pp. 252--253). Since ${:}H_{Curr}(\kappa){:}$ is the operator which arises
from the transverse term as a consequence of the first unitary transformation, it seems advisable to define the instantaneous
Coulomb interaction in an analogous way, in order that a covariant propagator may 
arise in the limits $\Lambda \to \infty$, followed by $V \to \infty$. 
\end{remark}

\section{The issue of triviality}

There are two different reasons why the present theory might be trivial. 

One is the infrared problem: since the photons have been \emph{decoupled} by applying $U_\kappa$, 
the sums over $k \in \Gamma_{\kappa}\setminus \{ 0 \}$ in $H_{CC}$ 
and $H^{mod.}_{ferm.}(\kappa)$ in \eqref{(48)} might lead to divergence (to $+\infty$) in the matrix 
elements of $H_{CC}$ and $H^{mod.}_{ferm.}(\kappa)$ (as $V \to \infty$). 
Since $H_{CC}:= H_{Coul.}(\kappa) - H_{Curr.}(\kappa)$ is equal to the l.h.s.~in \eqref{(43)},
and together with the fact that
	\begin{equation}
		\sup_{V} \frac{1}{V} \sum_{k \in (B \cap \Gamma_{\kappa} ) \setminus \{ 0 \} } 
		\frac{1}{|k|^{2}} \le c < \infty \; , 
		 \label{(68)}
	\end{equation}
where $B$ is some fixed ball centered at the origin, and $c$ a constant independent of $V$, 
$H_{CC}$ is seen to have good infrared behavior. 

For $H^{mod.}_{ferm.}(\kappa)$ 
one should examine $E_i$, $i=1, \ldots, 4$ in \eqref{(24)}. This is seen to involve a point-wise 
bound on the polarization vectors \eqref{(14.3)}. This requires some manipulations (we remind the 
reader that we follow the notation of Sakurai's book \cite{JJS}). We have
	\begin{equation}	
		\sum_{\sigma} v_{p,\sigma} v_{p,\sigma}^{+} = -\frac{(i \gamma \cdot p+m)}{2m} \; , 
		\quad
		\sum_{\sigma} u_{p,\sigma} u_{p,\sigma}^{+} = \frac{(-i \gamma \cdot p+m)}{2m} \; .   
	\label{(69)}
	\end{equation}
Using \eqref{(69)}, we may write the matrix summands in 
\eqref{coeff}, which involve the polarization vectors, in the typical 
form (up to some changes of sign and symbols):
	\begin{align}
		(\gamma \cdot \epsilon_{k,\alpha}) (\gamma \cdot p_{1}) (\gamma \cdot \epsilon_{k,\alpha})  
		& = \gamma \cdot \epsilon_{k,\alpha} (2p_{1} \cdot \epsilon_{k,\alpha}) - 
		(\gamma \cdot \epsilon_{k,\alpha})(\gamma \cdot p_{1}) 
		\nonumber \\
		& = 2p_{1} \cdot  \epsilon_{k,\alpha} \gamma \cdot \epsilon_{k,\alpha} - \gamma \cdot p_{1} \; .
	\label{(69.3)}
	\end{align} 
We now denote the absolute value of a number $a$ by $\| a \|$, and the components of a three-vector $v$ by $v_{i}, i=1, \cdots, 3$.
We have 
	\begin{equation}
		\| 2p_{1} \cdot \epsilon_{k,\alpha} \| \le 2 \, | p_{1}| \, |\epsilon_{k,\alpha}| = 2 \, |p_{1}| 
		\label{(69.4)} \; . 
	\end{equation}
Recall that the dot denotes the scalar product in Euclidean three-space, and  $| \, . \, |$ the Euclidean norm. 
By \eqref{(69.3)} and \eqref{(69.4)}, we are therefore reduced to finding a bound
	\begin{equation}
		\sup_{V,\alpha} \frac{1}{V} \sum_{k \in (B \cap \Gamma_{\kappa})\setminus \{ 0 \} } 
		\frac{1}{|k|^{2}} \max_{i=[1,3]}||\epsilon_{k,\alpha,i}|| \le c < \infty \, , 
	\label{(69.5)}
	\end{equation}
again for some fixed ball centred at the origin. Verification of \eqref{(69.5)} is straightforward, using \eqref{(14.3)}. 

\goodbreak

A second possible reason for triviality is \emph{charge renormalization}. Since the cutoff on 
the fermions should be independent of that in the photon field (as, in fact, adopted in \cite{JLW}). 
We now set a different cutoff $\Lambda^{'}$ on the fermion field \eqref{(11)} 
for the purpose of discussion.  Examination of $H_{CC}$ shows that, setting
	\begin{equation}
		e^{2} \equiv e_{bare}^{2} = O \left( |\Lambda|^{(-1-\epsilon)/3} \right) e_{ren}^{2}
	\label{(70)}
	\end{equation}
(recall that $|\Lambda|$ denotes the number of sites in the set $\Lambda$, 
which for a cube is of the same order as of a sphere of radius $\Lambda^{1/3}$ ), we would obtain
	\[
		\lim_{\Lambda \to \infty} H_{CC} = 0 \; , 
	\]
and the coefficients in \eqref{coeff}  show that
	\[	
		\lim_{\Lambda \to \infty}  
		\sum_{i=1}^{4} \bigl( E_i (\kappa)
		-  E^{tr.}_i (\kappa) \bigr) = 0 
	\]   
(note that $E_i (\kappa)$ and $E^{tr.}_i (\kappa)$ both depend on $\Lambda$, which was not explicitly indicated), 
under the same condition \eqref{(70)}. 
\color{black}
Thus, a suitable ``charge renormalization'' trivializes the 
theory, in the sense that the Hamiltonian reduces to the \emph{free} Hamiltonian, when the 
ultraviolet cutoff is removed (for fixed $m=m_{bare}$). The well known charge renormalization 
in qed (see \cite[pp.~279--283]{JJS} or \cite[pp.~445, 462]{Weinb1})
	\begin{equation}
		e_{bare}^{2} = \frac{1}{Z_{3}} e_{ren}^{2} \;  , \quad \text{with} \quad
		Z_{3} = 1 - \frac{e^{2}}{12\pi^{2}} \log \frac{\Lambda^{2/3}}{m^{2}} \; , 
		\label{(73.2)}
	\end{equation}
would \emph{not}, however, trivialize the theory. We conclude that there are no a priori reasons 
leading to suspect that qed, either with bare parameters, or~\eqref{(73.2)}, is a trivial theory. We emphasize, 
however, that there is also no a priori reason why \eqref{(73.2)} must be incorporated to show global existence 
of qed, and, for this reason, we did not investigate in detail the effect of 
replacing $m=m_{bare}$ by its renormalized version \cite{JJS}, see also \cite{Frohlich} for a discussion 
of mass renormalization in non-relativistic qed and references. For instance, a non-trivial S matrix
(with no mass or charge renormalizations) would not be contradictory with the same S matrix having an
asymptotic, but divergent, asymptotic expansion in terms of the fine structure constant, the latter, however,
making sense only with the usual mass and charge renormalisations (\cite{Weinb1, JJS}).

\section{Conclusion}

Our method, using the space ${\cal L}$ of Section 3, permits the use of a unitary transformation to 
convert the Fermi Hamiltonian of $QED_{1+3}$ in the Coulomb gauge to the form \eqref{(38)}, in which fermions 
interact both through the instantaneous Coulomb force and an additional term incorporating 
the ``photon cloud''. Going back to the original fermion Fock space, the free fermion Hamiltonian 
is modified by an additional quadratic term,  
	\[
		H^{mod.}_{ferm.}(\kappa) = H^\circ_{ferm.}(V)- H^{trunc.}_{Curr.}(\kappa)-H_{Coul.}^{trunc}(\kappa) \; , 
	\]
which may be interpreted as a \emph{Bremsstrahlung} term (see \cite{JJS}, p.~229--231). Thus, the \emph{photon cloud} also 
acts to \emph{dress} the electron-positron field. This suggests that the spectrum of the physical 
Hamiltonian $H_{phys.}$   
may be purely absolutely continuous, as suggested 
by Buchholz's important result~\cite{Buch} (proven under very reasonable assumptions) that 
in $QED_{1+3}$ there exist no eigenvalues of the mass operator $M^{2} = H_{phys.}^{2}- \vec{P}_{phys.}^{2}$, 
where $\vec{P}_{phys.}$ denotes the physical momentum.  

The most significant point about the unitary operator $U_\kappa$ is that it generalizes 
well-known transformations which, in the limits $\Lambda \to \infty$, followed 
by $V \to \infty$, lead to \emph{inequivalent representations} of the canonical commutation 
relations, for multiple reasons, reviewed in A.S.Wightman's lectures in~\cite{Wight}: infrared and ultraviolet 
divergences (note that due to $\lambda_{k} = (\omega_{k}\sqrt{2\omega_{k}})^{-1}$, $ \int {\rm d}k \, |\lambda_{k}|^{2}$ has both infrared 
and ultraviolet logarithmic divergences), as well as Euclidean invariance associated 
with Haag's theorem and vacuum polarization. 
The fact that non-Fock representations are imperative in the 
theory without cutoffs explains that we have obtained an apparently more realistic physical 
picture in the cutoff version, with reasonable properties when the cutoffs are removed (Theorem~\ref{thm:5.2}).

It is possible that $\mu(\kappa) = -\rho V  + \text{(divergent terms)} $ 
as $\Lambda \to \infty$. In that case, of course, only the divergent
terms should be renormalized. $-\rho$ may be related to the positronium energy (see \cite{JJS}). An indication that $\mu = O(V)$ 
is given by the fact that it involves a sum over momentum modes 
(see Proposition~\ref{prop:4.1}), and, indeed, a divergent part of type $\mu_{\Lambda}V$
with $\mu_{\Lambda} \to \infty$ as $\Lambda \to \infty$  would be the analogue of the ``chemical potential renormalization'' mentioned
in the introduction in connection with non-relativistic qed, with $V$  replacing $N$, the number of electrons, which is not a 
good quantum number in the relativistic case. It would thus be of great importance
to investigate the unitary transformation of Proposition~\ref{prop:4.1} in greater detail.
\color{black}

As a final remark, it is important to note that the Heisenberg picture time evolution is left 
invariant by a c-number vacuum energy 
renormalization. The positive renormalized Hamiltonian obtained in Theorem \ref{thm:5.2} is therefore the correct Hamiltonian in the
sense of automorphisms of the algebra of observables, in the proper limits.

\color{black}


\begin{thebibliography}{JLW69}

\bibitem
{Berezin}
F.~A. Berezin.
\newblock {\em The method of second quantization}.
\newblock Academic Press, 1966.

\bibitem
{BoBo}
N.~N. Bogoliubov and N.~N.~Bogoliubov Jr.
\newblock {\em Introduction to quantum statistical mechanics - 2nd. ed.}
\newblock World Scientific Publ. Co., 2010.

\bibitem
{BohrR}
N.~Bohr and L.~Rosenfeld.
\newblock Zur {F}rage der {M}essbarkeit der elektromagnetischen
  {F}eldgroessern.
\newblock {\em Kgl. Danske Vidensk. Selsk. Mat. Fys.Med.}, 12:8, 1933.

\bibitem
{Buch}
D.~Buchholz.
\newblock Gauss' law and the infraparticle problem.
\newblock {\em Phys. Lett. B}, 174:331, 1986.

\bibitem
{Dyson}
F.~J. Dyson.
\newblock Is a graviton detectable?
\newblock {\em Int. J. Mod. Phys. A}, 28:133041--1--14, 2013.

\bibitem
{Frohlich}
J.~Fr\"{o}hlich.
\newblock Spin, or actually, spin and quantum statistics.
\newblock {\em S\'{e}minaire Poincar\'{e}}, 11:1--56, 2007.

\bibitem
{JLW}
A.~Jaffe, O.~E. Lanford, and A.~S. Wightman.
\newblock A general class of cutoff model field theories.
\newblock {\em Comm. Math. Phys.}, 15:47--68, 1969.

\bibitem
{Landau}
In W.~Pauli, editor, {\em Niels {B}ohr and the development of physics}.
  Pergamon Press, 1955.

\bibitem
{Li2}
E.~H. Lieb.
\newblock Some problems in statistical mechanics that {I} would like to see
  solved, {IUPAP} {B}oltzmann prize lecture.
\newblock {\em Physica A}, 263:491--499, 1999.

\bibitem
{LL1}
In R.~Weikard and G.~Weinstein, editors, {\em Differential equations and
  mathematical physics}. Amer. Math. soc. Cambridge Mass., 2000.

\bibitem
{LL2}
E.~H. Lieb and M.~Loss.
\newblock Existence of atoms and molecules in nonrelativistic quantum
  electrodyamics.
\newblock {\em Adv. Theor. Math. Phys.}, pages 7667--7710, 2003.

\bibitem
{LL4}
E.~H. Lieb and M.~Loss.
\newblock A note on polarization vectors in quantum electrodynamics.
\newblock {\em Comm. Math. Phys.}, 252:477--483, 2004.

\bibitem
{LiS}
E.~H. Lieb and R.~Seiringer.
\newblock {\em The stability of matter in quantum mechanics}.
\newblock Cambridge University Press, 2010.

\bibitem
{JJS}
J.~J. Sakurai.
\newblock {\em Advanced quantum mechanics}.
\newblock Addison Wesley Publishing Co, 1967.

\bibitem
{Spohn}
H.~Spohn.
\newblock {\em Dynamics of charged particles and their radiation field}.
\newblock Cambridge University Press, 2004.

\bibitem
{Weinb1}
S.~Weinberg.
\newblock {\em The quantum theory of fields vol. 1}.
\newblock Cambridge University Press, 1996.

\bibitem
{Wight}
In M.~Levy, editor, {\em High energy electromagnetic interactions and field
  theory}. Gordon and Breach, 1967.

\bibitem
{WilPen}
R.~W. Wilson and A.~A. Penzias.
\newblock Isotropy of cosmic background radiation at 4080 {MH}z.
\newblock {\em Science}, 156:1100--1101, 1967.

\end{thebibliography}
\end{document}